
\newif\ifARXIV
\ARXIVtrue

\ifARXIV
\documentclass[runningheads]{llncs}
\else
\documentclass{llncs}
\fi

\usepackage{amssymb,amsmath}
\newtheorem{observation}{Observation}
\usepackage{xspace}
\usepackage{xcolor}
\usepackage{paralist}

\ifARXIV
\usepackage{geometry} 
\fi
\usepackage{latexsym}
\usepackage{hyperref}
\hypersetup{pdfborder={0 0 0.4}}
\usepackage{graphicx}
\usepackage{cite}
\usepackage{lineno}
\usepackage{subfigure}



\ifARXIV
\geometry{
   a4paper,         
   textwidth=12.2cm,  
   textheight=19.3cm, 
   heightrounded,   
   hratio=1:1,      
   vratio=4:5,      
}
\fi

\spnewtheorem*{sketch}{Proof sketch}{\itshape}{\rmfamily}

\title{Clustered Planarity = Flat Clustered Planarity\thanks{\ack}}

\newcommand{\ack}{This research was partially supported by MIUR
project ``MODE -- MOrphing graph Drawings Efficiently'',
prot.~20157EFM5C\_001.
}

\authorrunning{P. F. Cortese and M. Patrignani}

\author{Pier Francesco Cortese \and Maurizio~Patrignani}
  
\institute{Roma Tre University, Rome, Italy\\
\email{pierfrancesco@pfcortese.it} \\
\email{maurizio.patrignani@uniroma3.it} 
}

\graphicspath{{figures/}}

\definecolor{blue}{rgb}{0.274,0.392,0.666}
\definecolor{ourred}{rgb}{1,0.3,0.3}
\definecolor{ourgreen}{rgb}{0,0.588,0.509}

\newcommand{\remove}[1]{}

\newcommand{\Hconnected}{$\mathcal{H}$\emph{-conn}\xspace}
\newcommand{\Hnotroot}{$\mathcal{H}$\emph{-not-root}\xspace}

\let\doendproof\endproof
\renewcommand\endproof{\hfill $\qed$\doendproof}

\let\doendsketch\endsketch
\renewcommand\endsketch{\hfill $\qed$\doendsketch}

\begin{document}

\maketitle
\begin{abstract}
 The complexity of deciding whether a clustered graph admits a clustered planar drawing is a long-standing open problem in the graph drawing research area. Several research efforts focus on a restricted version of this problem where the hierarchy of the clusters is `flat', i.e., no cluster different from the root contains other clusters. We prove that this restricted problem, that we call \textsc{Flat Clustered Planarity}, retains the same complexity of the general \textsc{Clustered Planarity} problem, where the clusters are allowed to form arbitrary hierarchies.
 We strengthen this result by showing that \textsc{Flat Clustered Planarity} is polynomial-time equivalent to \textsc{Independent Flat Clustered Planarity}, where each cluster induces an independent set. We discuss the consequences of these results.
\end{abstract}

\section{Introduction}

A clustered graph (c-graph) is a planar graph with a recursive hierarchy defined on its vertices. A clustered planar (c-planar) drawing of a c-graph is a planar drawing of the underlying graph where: (i) each cluster is represented by a simple closed region of the plane containing only the vertices of the corresponding cluster, (ii) cluster borders never intersect, and (iii) any edge and any cluster border intersect at most once (more formal definitions are given in Section~\ref{se:preliminaries}).
The complexity of deciding whether a c-graph admits a c-planar drawing is still an open problem after more than 20 years of intense research~\cite{afp-sctgc-09,br-npcpc-16,cw-cccg-06,cd-cp-05,cdfpp-cccg-j-08,d-ltarc-98,ef-mvcg-96,efn-dcgog-99,eh-ncguf-00,fce-pcg-95,f-cgd-14,fkmp-cptr-15,gls-cpecg-05,gjlmpw-acptc-02,gms-pehtt-14,jstv-cpcfoe-09,jjkl-cpecg-08,s-ttpht-13}.

If we had an efficient c-planarity testing and embedding algorithm we could produce straight-line drawings of clustered trees~\cite{j-ddf-hdct-09} and straight-line drawings~\cite{afk-srdcg-11,efln-sldah-06} and orthogonal drawings~\cite{ddm-pcg-02} of c-planar c-graphs with rectangular regions for the clusters.

In order to shed light on the complexity of \textsc{Clustered Planarity}, this problem has been compared with other problems whose complexity is likewise challenging. This line of investigation was opened by Marcus Schaefer's polynomial-time reduction of \textsc{Clustered Planarity} to \textsc{SEFE}~\cite{s-ttpht-13}. \textsc{Simultaneous Embedding with Fixed Edges} (\textsc{SEFE}) takes as input two planar graphs $G_1=(V,E_1)$ and $G_2=(V, E_2)$ and asks whether a planar drawing~$\Gamma_1(G_1)$ and a planar drawing~$\Gamma_2(G_2)$ exist such that: (i) each vertex $v \in V$ is mapped to the same point in $\Gamma_1$ and in $\Gamma_2$ and (ii) every edge $e \in E_1 \cap E_2$ is mapped to the same Jordan curve in $\Gamma_1$ and in $\Gamma_2$. 

However, the polynomial-time equivalence of the two problems is open and the reverse reduction of \textsc{SEFE} to \textsc{Clustered Planarity} is known only for the case when the intersection graph $G_\cap=(V,E_1 \cap E_2)$ of the instance of \textsc{SEFE} is connected~\cite{ad-s=c-thecp-16}. Also in this special case, the complexity of the problem is unknown, with the exception of the case when $G_\cap$ is a star, which produces a c-graph with only two clusters, a known polynomial case for \textsc{Clustered Planarity}~\cite{adfpr-tsetg-12,hn-tpbecp-09}.

Since the general \textsc{Clustered Planarity} problem appears to be elusive, several authors focused on a restricted version of it where the hierarchy of the clusters is `flat', i.e., only the root cluster contains other clusters and it does not directly contain vertices of the underlying graph \cite{aft-rweg-18,ad-cpwp-16,addf-spt-13,addf-sptepg-16,addfpr-rccp-14,addfr-tibp-15,cdfk-atcefcg-14,cdpp-cccc-05,cdpp-ecpg-09,degg-sfefcg-conf-17,df-ectefcgsf-j-09,f-mdcg-j14,f-cpecc-17,f-egeg-17,fk-htamg-18,hn-tpbecp-09,jkkpsv-cpscc-09}. This restricted problem, that we call \textsc{Flat Clustered Planarity}, is expressive enough to be useful in several applicative domains, as for example in computer networks where routers are grouped into Autonomous Systems~\cite{cdds-rvet-18}, or social networks where people are grouped into communities~\cite{bbdlpp-valgu-11,ddlp-vaotm-10}, or software diagrams where classes are grouped into packages~\cite{s-vc-96}.
Also, several hybrid representations have been proposed for the visual analysis of (not necessarily planar) flat clustered graphs, such as mixed matrix and node-link representations~\cite{bbdlpp-valgu-11,ddfp-cnrcg-16,ddfp-cnrcg-jgaa-17,dlpt-nptsc-17,hfm-dhvsn-07}, mixed intersection and node-link representations~\cite{addfpr-ilrg-17}, and mixed space-filling and node-link representations~\cite{aky-vlgcfvt-04,imms-hsffd-09,zmc-ehctn-05}.

Unfortunately, the complexity of \textsc{Flat Clustered Planarity} is open as the complexity of the general problem. The authors of~\cite{br-npcpc-16}, after recasting \textsc{Flat Clustered Planarity} as an embedding problem on planar multi-graphs, conclude that we are still far away from solving it. The authors of~\cite{ad-s=c-thecp-16} wonder whether \textsc{Flat Clustered Planarity}
retains the same complexity of \textsc{Clustered Planarity}. In this paper we answer this question in the affirmative.
Obviously, a reduction of \textsc{Flat Clustered Planarity} to \textsc{Clustered Planarity} is trivial, since the instances of \textsc{Flat Clustered Planarity} are simply a subset of those of \textsc{Clustered Planarity}. The reverse reduction is the subject of Section~\ref{se:reduction}, that proves the following theorem.

\begin{theorem}\label{th:main}
There exists a quadratic-time transformation that maps an instance of \textsc{Clustered Planarity} to an equivalent instance of \textsc{Flat Clustered Planarity}.
\end{theorem}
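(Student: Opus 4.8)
The plan is to give a direct, constructive reduction that deletes the cluster hierarchy in favour of a flat clustering together with a family of gadgets placed inside the underlying graph, the gadgets being designed to reproduce, at the level of the graph $G'$, exactly the separation that the nested region boundaries impose in the original instance. Starting from a c-graph $C=(G,T)$ with hierarchy tree $T$, I would distribute the original vertices into flat clusters according to the \emph{deepest} cluster of $T$ that contains each of them, and then handle the nesting by acting on the edges: for every edge $e=(u,v)$ of $G$ and every cluster boundary that $e$ must traverse in $C$ (these correspond to the clusters lying on the $T$-path between $u$ and $v$), I would subdivide $e$ and insert dedicated \emph{port} vertices, one for each boundary crossing, in the nesting order induced by $T$. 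Grouping the port vertices that belong to a common cluster $\mu$, together with the original vertices directly in $\mu$, into a single flat cluster $F_\mu$ yields the flat clustering $T'$ of $V(G')$: the depth of $T$ has been traded for a ``linear'' encoding, along each edge, of the sequence of boundaries that used to be crossed.

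The two implications of the equivalence have very different flavours, and I would prove them separately. The completeness direction, that c-planarity of $C$ implies c-planarity of $C'$, is essentially bookkeeping: starting from a c-planar drawing of $C$, each gadget can be realised in a thin neighbourhood of the corresponding original boundary $\partial D_\mu$, placing the port vertices exactly where the subdivided edges met $\partial D_\mu$; the flat regions can then be taken small and pairwise disjoint, and the requirement that every edge cross each cluster boundary at most once is directly inherited from the drawing of $C$. The delicate direction is soundness, that c-planarity of $C'$ implies c-planarity of $C$, because here the drawing of $C'$ is adversarial and its pairwise-disjoint flat regions do not, and cannot, literally nest; the nesting of $T$ must therefore be \emph{recovered} from the combinatorial structure of the gadgets rather than read off from the geometry of the regions.

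Accordingly, the main obstacle is rigidity: I must show that the gadgets and their incidences are constrained enough that any c-planar drawing of $C'$ induces, after contracting each gadget back to a single boundary curve, a \emph{consistent} system of nested curves realising $T$, so that no drawing can permute or break the intended enclosure order. The principal lever I would exploit is precisely the flat c-planarity condition that each edge crosses the boundary of each cluster at most once: by forcing all interaction between the inside and the outside of a cluster to be channelled through its ports, I expect this condition to pin down, up to reversal, the order in which the ports of the various clusters appear along each edge, which is exactly the nesting data of $T$. Proving that this rules out the ``escaping'' configurations --- in which a vertex meant to lie inside a cluster is drawn outside it --- is the part I expect to require the most care. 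Finally, I would bound the size: there are $O(n)$ clusters and $O(n)$ edges in the planar graph $G$, and each edge traverses at most $O(n)$ nested boundaries, so the number of inserted port vertices and edges, and hence the running time of the transformation, is $O(n^2)$, which gives the claimed quadratic bound.
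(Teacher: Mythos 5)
Your construction differs from the paper's (one port per crossed boundary, inserted in a single pass, with the ports of $\mu$ merged into the same flat cluster as the vertices directly in $\mu$, whereas the paper dissolves one deepest higher cluster at a time and inserts \emph{two} fresh clusters $\chi$ and $\varphi$ per dissolved boundary), but the decisive problem is that your soundness direction is not a proof: you correctly identify that recovering the nesting from an adversarial flat drawing is the hard part, and then offer as the ``principal lever'' the fact that the single-crossing condition pins down the order of ports along each edge. That order is already forced syntactically by the path replacement and carries no information; what actually has to be controlled is the \emph{cyclic} arrangement of attachments around each flat cluster region, i.e.\ the possibility that material which should lie outside a former cluster $\mu^*$ is trapped in an internal face of the structure that should lie inside it, and the possibility of ``floating'' (disconnected) pieces drawn in the wrong place. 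The paper's proof of the corresponding step (Lemma~\ref{le:main}, in its general form) is devoted almost entirely to exactly these configurations: it classifies the sibling regions into anchored and floating sets, builds an auxiliary multigraph $H$ from $\chi$ and the anchored inner regions, shows that the internal faces of $\Gamma(H)$ contain no foreign clusters, explicitly reroutes edges to move the outer face when $R(\varphi)$ is enclosed (its \textsc{Case~2}), relocates floating regions into small disks inside $R(\chi)$ or $R(\varphi)$, and only then reconstructs $R(\mu^*)$. None of this, nor any substitute for it, appears in your sketch; the disconnected case and the case where the parent of the dissolved cluster is the root (the paper's hypotheses \Hconnected and \Hnotroot) are likewise unaddressed.

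Moreover, your gadget is structurally weaker in a way that makes this missing argument genuinely harder, not just unwritten. The paper's two rings yield a clean adjacency bipartition: the former children of $\mu^*$ have inter-cluster edges only among themselves and to $\chi$, while the former siblings connect only among themselves, to $\varphi$, and outward; this is precisely what lets $R(\mu^*)$ be reconstructed as $R(\chi)$ together with the region enclosed by the subgraph on $\chi,\nu_1,\dots,\nu_h$. With your single merged ring, the flat cluster $F_\mu$ is adjacent to \emph{both} sides of the former boundary (each port has one inner and one outer neighbour), so the bipartition fails and inner and outer attachments may interleave arbitrarily around the disk $R(F_\mu)$; it is not clear the reconstruction can still be carried out, and at minimum it requires a new argument you have not supplied. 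Two smaller points: your completeness direction silently assumes that a simple region can contain both the boundary ports of $\mu$ and the vertices directly in $\mu$ while excluding the child clusters of $\mu$, which needs either a normalization of the drawing or the homogenization preprocessing the paper performs (Lemma~\ref{le:preconditions}); and your bound of $O(n)$ clusters presupposes the same preprocessing (no single-child clusters), without which the cluster count is an independent parameter $c$.
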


With very similar techniques we are able to prove also a stronger result.  

\begin{theorem}\label{th:main2}
There exists a linear-time transformation that maps an instance of \textsc{Flat Clustered Planarity} to an equivalent instance of \textsc{Independent Flat Clustered Planarity}.
\end{theorem}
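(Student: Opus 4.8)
The plan is to eliminate intra-cluster edges one cluster at a time, reusing the gadget underlying Theorem~\ref{th:main}. The naive idea -- subdividing every intra-cluster edge and dropping the subdivision vertices into fresh singleton clusters -- does not work, so instead of trying to keep the original vertices of a cluster $\mu$ inside its region while deleting the edges among them, I would re-express the constraint carried by $\mu$ with the help of a fresh \emph{independent} cluster. Concretely, for every cluster $\mu$ I would declassify the vertices $V_\mu$ into the root, so that the edges of the induced subgraph $H_\mu$ become ordinary, unconstrained edges no longer lying inside any cluster, and I would surround $V_\mu$ by a separating gadget built from new vertices that form an independent set and hence constitute a legal cluster of \textsc{Independent Flat Clustered Planarity}. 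Each inter-cluster edge of the original instance incident to $\mu$ would be subdivided once so that it crosses this gadget exactly where it used to cross the boundary of $\mu$, preserving the cyclic interface order.

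The steps would then be as follows. First, add for each cluster $\mu$ the separating gadget and reroute the incident inter-cluster edges through it. Second, bound the size: the gadget for $\mu$ contributes $O(|V_\mu|)$ new vertices and edges plus one subdivision per incident inter-cluster edge, and -- crucially, since the input is already flat -- the gadget is applied once per cluster with no recursion over a hierarchy, so the total blow-up is linear rather than quadratic, matching the claimed bound. Third, prove equivalence by translating c-planar drawings in both directions: from a c-planar drawing of the original instance I would draw the gadget along the boundary of $\mu$ and route the subdivided interface edges through it, and conversely I would read the gadget as a closed separating curve, define the region of $\mu$ as the side containing the declassified copy of $H_\mu$, and verify that conditions (i)--(iii) are recovered.

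The hard part is the equivalence, and specifically arguing that the independent gadget is \emph{exactly} as strong as the original cluster boundary -- neither weaker nor stronger. The subtlety, which is precisely what defeats the naive subdivision, is that an independent cluster forces all of its vertices onto a common face, whereas an ordinary cluster only requires its induced subgraph to be drawable inside a disk: a non-outerplanar induced subgraph such as a $K_4$ with no incident inter-cluster edge is c-planar as an ordinary cluster, yet its subdivision has no face incident to all four branch vertices, so pushing the edges outward would wrongly make the instance non-c-planar. Keeping $H_\mu$ inside the gadget as root edges, rather than externalizing it, is what avoids this trap; the core of the proof is therefore to establish that the gadget separates $V_\mu$ from the rest of the graph in \emph{every} c-planar drawing -- the analogue of the nesting property used for Theorem~\ref{th:main} -- while the gadget's vertices remain an independent set that can still be legally enclosed in a region of their own. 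Reconciling these three requirements (independence, enclosability, and forced separation) simultaneously is where I expect the main difficulty to lie.
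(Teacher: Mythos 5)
Your construction is recognizably the same gadget as the paper's: subdivision vertices placed on the inter-cluster edges of $\mu^*$ form a ring-shaped independent cluster that impersonates the cluster boundary, while the intra-cluster edges of $\mu^*$ are deliberately left intact as ordinary edges (the paper wraps each vertex of $\mu^*$ in its own singleton cluster rather than attaching it to the root, which you will need as well, since a flat inclusion tree requires all leaves at depth~$2$ -- a trivial fix, as singleton clusters are independent). Your diagnosis of why the naive intra-edge subdivision fails (the $K_4$ example) is correct and matches the paper's design. However, there are two concrete gaps. First, you subdivide each inter-cluster edge \emph{once} and use \emph{one} ring cluster; the paper subdivides each such edge \emph{twice}, producing two nested rings $\chi$ and $\varphi$, and this is not cosmetic: with two rings the former contents of $\mu^*$ attach only to $\chi$ while the rest of the graph attaches only to $\varphi$, so the subgraph consisting of the insiders plus $\chi$ meets the outside world only through the $\chi$--$\varphi$ strands. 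This edge-level separation is exactly what the reverse direction of Lemma~\ref{le:main2} exploits. With a single ring, insider and outsider attachments interleave around the boundary of the same region, and an outsider vertex can legally sit inside an internal face of the insiders' subgraph while remaining anchored through the ring's region (its edge crosses the ring boundary only once); the recovery of $R(\mu^*)$ as ``the side containing $H_\mu$'' then simply does not go through.

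Second, and more fundamentally, the statement you propose as the core of the proof -- that the gadget separates $V_{\mu}$ from the rest of the graph \emph{in every} c-planar drawing of the transformed instance -- is false, even for the paper's stronger two-ring gadget. Floating connected components may be drawn inside the wrong faces, and, because the parent of $\mu^*$ is here the root (so Hypothesis \Hnotroot fails, as the paper explicitly notes), the insiders' subgraph may enclose $R(\varphi)$ and with it everything else. Accordingly, the paper's proof of Lemma~\ref{le:main2} is a drawing-\emph{modification} argument, not a separation invariant: it classifies sibling regions into anchored and floating, relocates the floating regions into small empty disks inside $R(\chi)$ or $R(\varphi)$, and reroutes a minimal set of edges of an auxiliary multigraph $H$ so as to bring $R(\varphi)$ onto the external face; only after these surgeries can the region $R(\mu^*)$ be carved out as the region covered by $\Gamma(H)$. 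Your plan stops precisely where this work begins, so the equivalence -- the entire technical content of the theorem -- remains unproven. Your linear-time accounting (one gadget per cluster, no recursion over a hierarchy, $O(1)$ subdivisions per inter-cluster edge) is correct and matches the paper.
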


Here, by \textsc{Independent Flat Clustered Planarity} we mean the restriction of \textsc{Flat Clustered Planarity} to instances where each non-root cluster induces an independent set. 

The paper is structured as follows. Section~\ref{se:preliminaries} contains basic definitions. Section~\ref{se:reduction} contains the proof of Theorem~\ref{th:main} under some simplifying hypotheses 
\ifARXIV
(which are removed in~\ref{ap:lemma}).
\else 
(which are removed in~\cite{cp-cpfcp-tr-18}).
\fi
Some immediate consequences of Theorem~\ref{th:main} are discussed in Section~\ref{se:discussion}. The proof of Theorem~\ref{th:main2} and some remarks about it are in Sections~\ref{se:reduction2} and~\ref{se:discussion2}, respectively. Conclusions and open problems are in Section~\ref{se:conclusions}. 
\ifARXIV
For space reasons some proofs are moved to the appendix.
\else
For space reasons some proofs are sketched or, when trivial, omitted.
\fi

\section{Preliminaries}\label{se:preliminaries}

Let $T$ be a rooted tree.
We denote by $r(T)$ the root of $T$ and by $T[\mu]$ the subtree of $T$ rooted at one of its nodes~$\mu$. 
The \emph{depth} of a node $\mu$ of $T$ is the length (number of edges) of the path from $r(T)$ to~$\mu$. The \emph{height} $h(T)$ of a tree $T$ is the maximum depth of its nodes. 

The nodes of a tree can be partitioned into \emph{leaves}, that do not have children, and \emph{internal nodes}. In turn, the internal nodes can be partitioned into two sets: \emph{lower nodes}, whose children are all leaves, and \emph{higher nodes}, that have at least one internal-node child. 
We say that a node is \emph{homogeneous} if its children are either all leaves or all internal nodes. A tree is  \emph{homogeneous} if all its nodes are homogeneous.
We say that a tree is \emph{flat} if all its leaves have depth~$2$. A flat tree is homogeneous. 
Figure~\ref{fi:trees} shows a non-homogeneous tree (Fig.~\ref{fi:non-homogeneous}), a homogeneous tree (Fig.~\ref{fi:homogeneous}), and a flat tree (Fig.~\ref{fi:flat}). 

\begin{figure}[htb]
\centering
\subfigure[]{\includegraphics[page=1,height=0.10\textwidth]{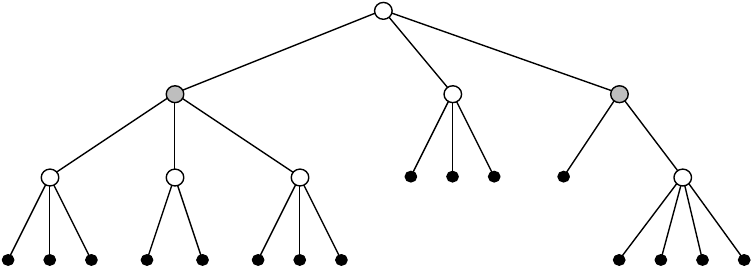}\label{fi:non-homogeneous}}
\hfil
\subfigure[]{\includegraphics[page=1,height=0.10\textwidth]{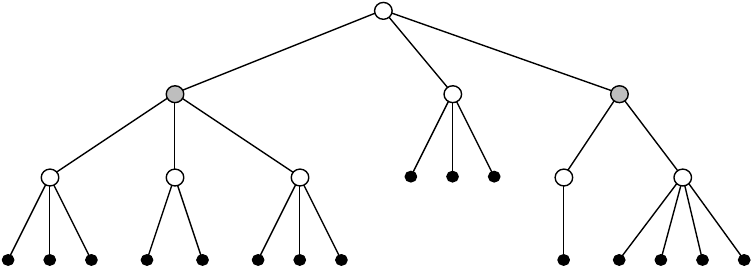}\label{fi:homogeneous}}
\hfil
\subfigure[]{\includegraphics[page=1,height=0.08\textwidth]{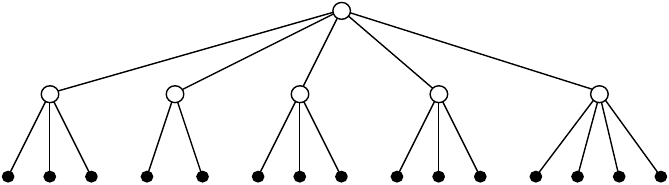}\label{fi:flat}}
\hfil
\caption{\subref{fi:non-homogeneous} A tree that is not homogeneous. \subref{fi:homogeneous} A homogeneous tree. \subref{fi:flat} A flat tree. 
}\label{fi:trees}
\end{figure}

We also need a special notion of size: the \emph{size} of a tree $T$, denoted by $\mathcal{S}(T)$, is the number of higher nodes of $T$ different from the root of $T$. 
Observe that a homogeneous tree $T$ is flat if and only if $\mathcal{S}(T)=0$.
For example, the sizes of the trees represented in Figs.~\ref{fi:non-homogeneous}, \ref{fi:homogeneous}, and~\ref{fi:flat} are $2$, $2$, and $0$, respectively (filled gray nodes in Fig.~\ref{fi:trees}). 
\ifARXIV
The proof of the following lemma can be found in~\ref{ap:lemmata}.
\else
The proof of the following lemma is trivial.
\fi

\begin{lemma}\label{le:homogeneous-subtree}
A homogeneous tree $T$ of height $h(T) \geq 2$ and size $\mathcal{S}(T)>0$ contains at least one node $\mu^* \neq r(T)$ such that $T[\mu^*]$ is flat.
\end{lemma}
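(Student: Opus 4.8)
The plan is to reduce the flatness of a subtree to a purely local condition on its root, and then to locate such a root by an extremal argument. First I would record what it means, structurally, for a subtree $T[\mu^*]$ to be flat. Since $T$ is homogeneous, so is $T[\mu^*]$, and I claim that $T[\mu^*]$ is flat precisely when $\mu^*$ is a \emph{higher} node all of whose children are \emph{lower} nodes. In one direction, if $\mu^*$ is higher then homogeneity forces all its children to be internal; if moreover each such child is lower, then the children of $\mu^*$ lie at depth $1$ and the only leaves of $T[\mu^*]$ are their children, at depth $2$, which is exactly flatness. Conversely, flatness forces every child of $\mu^*$ to be internal (otherwise there would be a leaf at depth $1$) and each child to have only leaves as children (otherwise there would be a leaf at depth $\geq 3$), i.e.\ to be lower.

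With this reformulation in hand, the goal becomes: exhibit a higher node $\mu^* \neq r(T)$ whose children are all lower nodes. Here I would invoke the hypothesis $\mathcal{S}(T) > 0$, which by definition says that the set $H$ of higher nodes different from $r(T)$ is non-empty. I would then take $\mu^*$ to be a node of \emph{maximum depth} in $H$, which is well defined since $H$ is finite and non-empty.

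The key step is to verify that this choice works. Because $\mu^* \in H$ is higher, homogeneity guarantees that every child of $\mu^*$ is an internal node, hence each child is either higher or lower. Suppose for contradiction that some child $c$ of $\mu^*$ were higher. Then $c$ is an internal node different from $r(T)$ lying strictly below $\mu^*$, so $c \in H$ and $c$ has depth strictly greater than that of $\mu^*$, contradicting the maximality of $\mu^*$. Therefore all children of $\mu^*$ are lower nodes, and by the reformulation $T[\mu^*]$ is flat, as required.

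I expect the main obstacle to be not the extremal argument, which is routine, but stating the reformulation cleanly: one must use homogeneity in both directions and be careful that ``flat'' refers to depths measured \emph{inside} $T[\mu^*]$, not inside $T$. I would also remark that $\mathcal{S}(T) > 0$ by itself already forces $h(T) \geq 2$ (a non-root higher node has an internal child, which in turn has children), so the height hypothesis in the statement is only there for emphasis and is not separately needed.
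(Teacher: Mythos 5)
Your proof is correct, but it reaches the flat subtree by a different extremal choice than the paper. The paper's proof anchors at a \emph{deepest leaf} $v$: it first notes that $\mathcal{S}(T)>0$ forces $h(T)\geq 3$, then takes $\mu^*$ to be the grandparent of $v$, so that $h(T[\mu^*])=2$ follows immediately from the maximality of the depth of $v$, and homogeneity (applied exactly where you apply it, to show all children of $\mu^*$ are internal) yields flatness; the bound $h(T)\geq 3$ is also what guarantees $\mu^*\neq r(T)$ there. You instead anchor at a \emph{deepest non-root higher node}, which buys you two small simplifications: non-rootness of $\mu^*$ is automatic from the definition of $H$, and the height computation is replaced by your local characterization (in a homogeneous tree, $T[\mu^*]$ is flat iff $\mu^*$ is higher with all children lower), which is correctly proved in both directions and is a reusable statement in its own right. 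The two arguments are of comparable length and both hinge on homogeneity at the same single point; the paper's is marginally more direct, yours isolates the structural content more explicitly. Your closing remark that $h(T)\geq 2$ is redundant is also right, and in fact agrees with the paper, whose proof derives the stronger $h(T)\geq 3$ from $\mathcal{S}(T)>0$ alone.
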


A \emph{graph} $G=(V,E)$ is a set $V$ of \emph{vertices} and a set $E$ of \emph{edges}, where each edge is an unordered pair of vertices. A \emph{drawing} $\Gamma(G)$ of $G$ is a mapping of its vertices to distinct points on the plane and of its edges to Jordan curves joining the incident vertices.
Drawing $\Gamma(G)$ is \emph{planar} if no two edges intersect except at common end-vertices. A graph is \emph{planar} if it admits a planar drawing.

A \emph{clustered graph} (or \emph{c-graph}) $C$ is a pair $(G,T)$ where $G=(V,E)$ is a planar graph, called the \emph{underlying graph} of $C$, and $T$, called the \emph{inclusion tree} of $C$, is a rooted tree such that the set of leaves of $T$ coincides with $V$. A \emph{cluster} $\mu$ is an internal node of $T$. When it is not ambiguous we also identify a cluster with the respective subset of the vertex set. 
An \emph{inter-cluster edge} of a cluster $\mu$ of $T$ is an edge of $G$ that has one end-vertex inside $\mu$ and the other end-vertex outside~$\mu$. 
An \emph{independent set} of vertices is a set of pairwise non-adjacent vertices. A cluster $\mu$ of $T$ is \emph{independent} if its vertices form an independent set. A c-graph is \emph{independent} if all its clusters, with the exception of the root, are independent clusters.
A cluster $\mu$ of $T$ is a \emph{lower cluster} (\emph{higher cluster}) of $C$ if $\mu$ is a lower node (higher node) of $T$. 

A c-graph is \emph{flat} if its inclusion tree is flat. The clusters of a flat c-graph are all lower clusters with the exception of the root cluster. A cluster is called \emph{singleton} if it contains a single cluster or a single vertex.  

A \emph{drawing} $\Gamma(C)$ of a c-graph $C(G,T)$ is a mapping of vertices and edges of $G$ to points and to Jordan curves joining their incident vertices, respectively, and of each internal node $\mu$ of $T$ to a simple closed region $R(\mu)$ containing exactly the vertices of $\mu$. 
Drawing $\Gamma(C)$ is \emph{c-planar} if: (i) curves representing edges of $G$ do not intersect except at common end-points; (ii) the boundaries of the regions representing clusters do not intersect; and (iii) each edge intersects the boundary of a region at most one time.
A c-graph is \emph{c-planar} if it admits a c-planar drawing.

Problem \textsc{Clustered Planarity} is the problem of deciding whether a c-graph is c-planar. Problem \textsc{Flat Clustered Planarity} is the restriction of \textsc{Clustered Planarity} to flat c-graphs. Problem \textsc{Independent Flat Clustered Planarity} is the restriction of \textsc{Clustered Planarity} to independent flat c-graphs.

%
\ifARXIV
The proof of the following lemmas can be found in~\ref{ap:lemmata}.
\else
The proof of the following lemmas can be found in~\cite{cp-cpfcp-tr-18}.
\fi


\remove{
\begin{lemma}\label{le:t-homogeneous}
\textsc{Clustered Planarity} can be reduced in linear time to the case when the inclusion tree is homogeneous.
\end{lemma}

\begin{lemma}\label{le:h-and-n} 
\textsc{Clustered Planarity} can be reduced in linear time to the case when the root of the inclusion tree $T$ has at least two children and $h(T) \leq n$, where $n$ is the number of vertices of the underlying graph.
\end{lemma}
}

\begin{lemma}\label{le:preconditions}
An instance $C(G,T)$ of \textsc{Clustered Planarity} with $n$ vertices and $c$ clusters can be reduced in time $O(n+c)$ to an equivalent instance such that:
\begin{inparaenum}[(1)]
\item\label{le:preconditions1} $T$ is homogeneous,
\item\label{le:preconditions2} $r(T)$ has at least two children, and 
\item\label{le:preconditions3} $h(T) \leq n-1$.
\end{inparaenum}
\end{lemma}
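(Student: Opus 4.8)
The plan is to meet the three requirements by a short pipeline of purely local modifications of the inclusion tree, each of which manifestly preserves c-planarity, and then to establish the height bound by a counting argument on the resulting tree. First I would dispose of the degenerate cases $n \le 1$, where $C$ is trivially c-planar and can be mapped to any fixed yes-instance satisfying the three conditions. For $n \ge 2$ I would apply two operations, in this order. \emph{Operation~1 (pruning).} Call a cluster a \emph{singleton} if it has exactly one child (a vertex or another cluster); I would repeatedly delete singleton clusters, reattaching the unique child of the deleted node to its parent (and, if $r(T)$ itself is a singleton, promoting its child to the new root). When no singleton remains, every surviving cluster has at least two children, which in particular gives condition~\ref{le:preconditions2}. \emph{Operation~2 (homogenizing).} For every non-homogeneous cluster $\mu$, i.e.\ one having both vertex-children and cluster-children, I would enclose each vertex-child of $\mu$ in its own new singleton cluster inserted as a child of $\mu$. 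Afterwards the children of every cluster are either all vertices or all clusters, so the tree is homogeneous, giving condition~\ref{le:preconditions1}; note that this step turns each non-homogeneous node into a \emph{higher} node with at least two children and preserves~\ref{le:preconditions2}.

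For the equivalence I would observe that none of these moves changes the underlying graph and that each preserves the set of c-planar drawings up to the obvious correspondence: a deleted singleton region can always be reinserted as a thin band hugging the region (or a tiny disk around the vertex) of its unique child, crossed exactly once by every edge that already crosses the child, and symmetrically such a region can be erased; the wrapper disks of Operation~2 behave the same way. Hence the transformed instance is c-planar if and only if $C$ is.

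The main obstacle is condition~\ref{le:preconditions3}, namely $h(T) \le n-1$ for the final tree. I would prove the following counting claim: in a homogeneous tree with $n$ leaves in which every higher node has at least two children, the height is at most $n-1$. Take a longest root-to-leaf path $u_0, u_1, \ldots, u_h$; then $u_0, \ldots, u_{h-2}$ are higher nodes (each has the internal node $u_{i+1}$ as a child) and $u_{h-1}$ is a lower node. Each of the $h-1$ higher nodes on the path has an off-path child whose subtree contributes a leaf distinct from $u_h$ and from the others, so together with $u_h$ we already have $h$ distinct leaves, and I must find one more to reach $n \ge h+1$. If $u_{h-1}$ has at least two children it supplies a further off-path leaf and we are done. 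Otherwise $u_{h-1}$ is a singleton wrapper created by Operation~2, so its parent $u_{h-2}$ was non-homogeneous and therefore keeps a genuine cluster-child $\nu$ off the path; since every cluster has at least two children after Operation~1 (and Operation~2 removes no leaves) the subtree of $\nu$ contains at least two leaves, again providing the missing leaf. Either way $n \ge h+1$, establishing~\ref{le:preconditions3}; the subtlety to get right is precisely this terminal-singleton case.

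Finally I would verify the running time: Operation~1 performs at most $c$ deletions and Operation~2 inserts at most $n$ wrappers, each in constant amortized time on an adjacency-list representation of $T$, for a total of $O(n+c)$. The only point requiring care is that the two operations be applied in the stated order, since performing Operation~1 after Operation~2 would delete the very wrappers introduced to secure homogeneity.
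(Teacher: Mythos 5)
Your proposal is correct, and it is built from the same two local transformations as the paper's proof: pruning singleton clusters (reattaching the unique child, promoting at the root) and wrapping each leaf child of a non-homogeneous cluster in a new lower cluster, with the same drawing-based equivalence arguments in both directions (erase a boundary one way; reinsert an arbitrarily thin concentric region or a small disk around a vertex the other way). Where you genuinely depart from the paper is in how you make the three properties coexist. The paper proves property (1) by wrapping and properties (2)--(3) by pruning, bounding the height by the simple count that in a tree whose internal nodes all have at least two children the $n$ leaves satisfy $n \geq n_i + 1$, whence $h \leq n_i \leq n-1$; it does not explicitly address the interaction of the two steps, and indeed pruning after wrapping would delete the very wrappers that secure homogeneity, while wrapping after pruning reintroduces singleton clusters, so the paper's degree count does not apply verbatim to the final tree. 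You fix the order explicitly (prune, then homogenize) and replace the global degree count by a longest-path argument on the \emph{final} tree, with a case analysis for a terminal singleton wrapper; this correctly yields $n \geq h+1$. Two small remarks. First, your ``counting claim'' is false as literally stated for arbitrary homogeneous trees in which every higher node has at least two children: a root with two singleton lower children has $h = n = 2 > n-1$. Your argument is nonetheless sound, because it immediately invokes the provenance of singletons in your pipeline --- a lower node with one child must be a wrapper, so its parent was non-homogeneous and retains a genuine cluster child whose subtree, having survived pruning, contributes at least two leaves --- so the claim should be phrased as a statement about the pipeline's output trees rather than about all such homogeneous trees. Second, there is a shorter route to the same end that you could have used: wrapping never increases the height at all, since a non-homogeneous node has an internal child whose subtree contains a leaf strictly deeper than the newly wrapped leaf; hence the bound $h \leq n-1$, established on the pruned tree by the paper's simple count, survives your Operation~2 unchanged, and the longest-path analysis becomes unnecessary.
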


\section{Proof of Theorem~\ref{th:main}}\label{se:reduction}

We describe a polynomial-time reduction of \textsc{Clustered Planarity} to \textsc{Flat Clustered Planarity}.
Let $C(G,T)$ be a clustered graph, let $n$ be the number of vertices of~$G$, and let $c$ be the number of clusters of $C$. Due to Lemma~\ref{le:preconditions} we can achieve in $O(n+c)$ time that $T$ is homogeneous and $\mathcal{S}(T) \in O(n)$. We reduce~$C$ to an equivalent instance $C_f(G_f,T_f)$ where $T_f$ is flat. The reduction consists of a sequence of transformations of $C=C_0$ into $C_1$, $C_2$, \dots, $C_{\mathcal{S}(T)} = C_f$, where each $C_i(G_i,T_i)$, $i=0, 1, \dots, {\mathcal{S}(T)}$, has an homogeneous inclusion tree $T_i$ and each transformation takes $O(n)$ time. 

\begin{figure}[tb]
\centering
\subfigure[]{\includegraphics[page=1,height=0.28\textwidth]{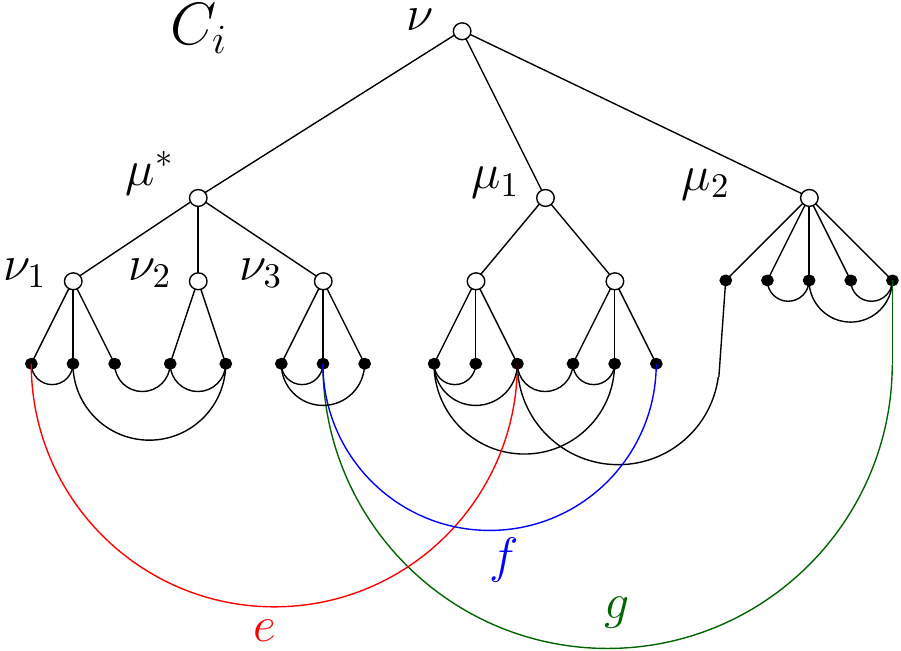}\label{fi:c-graph-c-i}}
\hfil
\subfigure[]{\includegraphics[page=2,height=0.28\textwidth]{figures/construction.pdf}\label{fi:c-graph-c-i+1}}
\caption{\subref{fi:c-graph-c-i} A c-graph $C_i$. Inter-cluster edges of $\mu^*$ are colored red, green, and blue. \subref{fi:c-graph-c-i+1} The construction of $C_{i+1}$.}\label{fi:construction}
\end{figure}

Consider any $C_i(G_i,T_i)$, with $i=0,\dots,{\mathcal{S}(T)-1}$, where $T_i$ is a homogeneous, non-flat tree of height $h(T_i) \geq 2$ (refer to Fig.~\ref{fi:c-graph-c-i}). 
By Lemma~\ref{le:homogeneous-subtree}, $T_i$ has at least one node $\mu^* \neq r(T_i)$ such that $T_i[\mu^*]$ is flat. Since $\mu^* \neq r(T_i)$, node $\mu^*$ has a parent $\nu$. Also, denote by $\nu_1, \nu_2, \dots, \nu_h$ the children of $\mu^*$ and by $\mu_1, \mu_2, \dots, \mu_k$ the siblings of $\mu^*$ in~$T_i$.
We construct $C_{i+1}(G_{i+1},T_{i+1})$ as follows (refer to Fig.~\ref{fi:c-graph-c-i+1}). Graph $G_{i+1}$ is obtained from $G_i$ by introducing, for each inter-cluster edge $e=(u,v)$ of $\mu^*$, two new vertices $e_\chi$ and $e_\varphi$ and by replacing $e$ with a path $(u,e_\chi)(e_\chi,e_\varphi)(e_\varphi,v)$.
Tree $T_{i+1}$ is obtained from $T_i$ by removing node $\mu^*$, attaching its children $\nu_1, \nu_2, \dots, \nu_h$ directly to $\nu$ and adding to $\nu$ two new children $\chi$ and $\varphi$, where cluster $\chi$ (cluster $\varphi$, respectively) contains all vertices $e_{\chi}$ ($e_\varphi$, respectively) introduced when replacing each inter-cluster edge $e$ of $\mu^*$ with a path. 
\ifARXIV
The proof of the following lemmas can be found in~\ref{ap:lemma}.
\else
The following lemmas are trivial.
\fi

\begin{lemma}\label{le:homogeneous-preserved}
If $T_i$ is homogeneous then $T_{i+1}$ is homogeneous. 
\end{lemma}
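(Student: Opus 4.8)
The plan is to verify that the single transformation step from $T_i$ to $T_{i+1}$ preserves homogeneity by checking the status of every node whose children are affected. First I would recall the construction: we remove the flat node $\mu^*$, reattach its leaf-children $\nu_1,\dots,\nu_h$ directly to the parent $\nu$, and add to $\nu$ two brand-new children $\chi$ and $\varphi$, each of which is a cluster whose children are the newly created subdivision vertices $e_\chi$ and $e_\varphi$. Crucially, because $T_i[\mu^*]$ is flat, its children $\nu_1,\dots,\nu_h$ are all leaves of $T_i$ (equivalently, vertices of $G_i$), and they remain leaves in $T_{i+1}$. Homogeneity is a local property of each node — it depends only on whether that node's children are \emph{all} leaves or \emph{all} internal nodes — so it suffices to examine only those nodes whose child-sets change.

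The key steps, in order, are as follows. \textbf{(i)} The only nodes whose sets of children differ between $T_i$ and $T_{i+1}$ are $\nu$ (it loses the child $\mu^*$ but gains $\nu_1,\dots,\nu_h,\chi,\varphi$), and the two new nodes $\chi$ and $\varphi$; every other node keeps exactly the children it had, so its homogeneity status is inherited directly from $T_i$. \textbf{(ii)} For $\chi$ and $\varphi$: their children are the subdivision vertices $e_\chi$, resp.\ $e_\varphi$, which are leaves of $T_{i+1}$, so both $\chi$ and $\varphi$ are homogeneous (all children are leaves). \textbf{(iii)} For $\nu$, I would argue by cases on whether $\nu$ is the root. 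If $\nu = r(T_i)$, then since $T_i$ is homogeneous and had the internal node $\mu^*$ as a child, all of $\nu$'s children in $T_i$ were internal nodes; in $T_{i+1}$ its children are $\nu_1,\dots,\nu_h$ (leaves) together with the remaining siblings $\mu_1,\dots,\mu_k$ and the new internal clusters $\chi,\varphi$ — a mix of leaves and internal nodes — which would make $\nu$ non-homogeneous. This is precisely why the construction must be analysed more carefully, and it is the main obstacle.

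The resolution of that obstacle relies on the hypothesis, guaranteed by Lemma~\ref{le:preconditions} together with the choice of $\mu^*$, that $\nu \neq r(T_i)$ is itself a higher node and in fact that $\mu^*$ is chosen so that $\nu$'s children in $T_i$ are \emph{all} internal nodes (this is forced by homogeneity of $\nu$, since $\nu$ has the internal child $\mu^*$). After the transformation $\nu$'s children are $\nu_1,\dots,\nu_h$ (now leaves) plus the internal clusters $\mu_1,\dots,\mu_k,\chi,\varphi$. For $\nu$ to stay homogeneous we need its children to be uniformly leaves or uniformly internal; since $\chi$ and $\varphi$ are always present and internal, we need $h = 0$, i.e.\ $\mu^*$ must have had no leaf-children after the reattachment is accounted for. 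The correct reading of the construction is that the flat subtree $T_i[\mu^*]$ contributes its leaves to $\nu$, so homogeneity of $\nu$ is maintained exactly when the remaining siblings and the new clusters together with the reattached leaves do not mix types at $\nu$.

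I therefore expect the honest proof to invoke an invariant stronger than bare homogeneity — namely that at each stage the relevant parent $\nu$ either retains an all-internal child set or is handled so that the leaves $\nu_1,\dots,\nu_h$ are absorbed into a structure that keeps $\nu$ homogeneous. The hardest part of writing the proof cleanly is pinning down this invariant at $\nu$: one must confirm that reattaching the former grandchildren of $\nu$ as new children, while simultaneously introducing $\chi$ and $\varphi$, does not create a node with both leaf- and internal-node children. Once the behaviour at $\nu$, $\chi$, and $\varphi$ is settled, the claim follows immediately because homogeneity at all untouched nodes is preserved verbatim, and I would close the argument by simply collecting the three local checks into the conclusion that $T_{i+1}$ is homogeneous.
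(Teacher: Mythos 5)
There is a genuine gap, and it originates in a misreading of the definition of \emph{flat}: you assert that, because $T_i[\mu^*]$ is flat, its children $\nu_1,\dots,\nu_h$ are leaves of $T_i$ (vertices of $G_i$). In the paper a tree is flat when all its leaves have depth~$2$, so in $T_i[\mu^*]$ the nodes $\nu_1,\dots,\nu_h$ sit at depth~$1$ and are therefore \emph{internal} lower nodes (clusters), each of whose children are leaves; this is exactly how $\mu^*$ arises in Lemma~\ref{le:homogeneous-subtree}, where it is taken as the grandparent of a deepest leaf. This misreading manufactures the ``obstacle'' at $\nu$ --- a mixed child set of leaves and internal nodes --- which does not exist: after the transformation the children of $\nu$ are $\mu_1,\dots,\mu_k$ (internal, because $T_i$ is homogeneous and $\nu$ already had the internal child $\mu^*$), $\nu_1,\dots,\nu_h$ (internal, as just observed), and $\chi,\varphi$ (internal, since their children are the subdivision vertices). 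Hence all children of $\nu$ are internal nodes and $\nu$ stays homogeneous, with no case distinction on whether $\nu$ is the root and no extra invariant needed; the hypothesis \Hnotroot matters only for Lemma~\ref{le:main}, not here.

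Beyond the misreading, the proposal also fails on its own terms: having (incorrectly) concluded that $\nu$ would acquire both leaf and internal children, you do not resolve the contradiction but instead appeal to an unspecified ``stronger invariant'' and to the condition $h=0$, which is impossible since $\mu^*$ is an internal node and thus has at least one child. So the argument as written never closes. Your step~(i) --- homogeneity is local and only $\nu$, $\chi$, $\varphi$ have altered child sets --- and step~(ii) --- $\chi$ and $\varphi$ have all leaf children --- are correct and coincide with the paper's short proof; once the status of $\nu_1,\dots,\nu_h$ is corrected, those two observations complete the proof immediately.
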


\begin{lemma}\label{le:size-decreases}
We have that $\mathcal{S}(T_{i+1}) = \mathcal{S}(T_i) - 1$. 
\end{lemma}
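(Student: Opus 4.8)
The plan is to prove the equality by a direct bookkeeping of the \emph{higher} nodes of $T_i$ and $T_{i+1}$, since by definition $\mathcal{S}(T)$ counts exactly the higher nodes different from the root. The root is unchanged by the construction, so it suffices to show that the set of higher nodes distinct from the root loses precisely the single node $\mu^*$ when passing from $T_i$ to $T_{i+1}$. I would carry this out by examining, node by node, whether the construction can flip a node's status between \emph{higher} and \emph{lower}.

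First I would observe that $\mu^*$ is a higher node of $T_i$ different from the root: since $T_i[\mu^*]$ is flat (hence of height~$2$), each child $\nu_j$ of $\mu^*$ is an internal node (in fact a lower node, its children being leaves), so $\mu^*$ has an internal-node child and is therefore higher; moreover $\mu^* \neq r(T_i)$ by hypothesis. Thus $\mu^*$ contributes exactly $1$ to $\mathcal{S}(T_i)$. In $T_{i+1}$ the node $\mu^*$ has been deleted, so it contributes nothing to $\mathcal{S}(T_{i+1})$.

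Next I would verify that the construction changes the status of no other node. The only nodes whose child set is altered are $\nu$ and the newly created $\chi,\varphi$; every other node keeps exactly the same children. For $\nu$: in $T_i$ it is higher (it has the internal child $\mu^*$), and in $T_{i+1}$ it still has the internal children $\nu_1,\dots,\nu_h$ (with $h \geq 1$), so it remains higher and its contribution to the size is unchanged. The new clusters $\chi$ and $\varphi$ contain only the freshly inserted subdivision vertices $e_\chi,e_\varphi$, which are leaves, so $\chi$ and $\varphi$ are lower nodes and contribute $0$. Each $\nu_j$ retains exactly its original leaf children (the subdivision vertices are placed inside $\chi$ and $\varphi$, not inside the $\nu_j$), so each $\nu_j$ stays a lower node, as do the untouched siblings $\mu_1,\dots,\mu_k$. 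Consequently the set of higher nodes of $T_{i+1}$ is precisely that of $T_i$ with $\mu^*$ removed, which yields $\mathcal{S}(T_{i+1})=\mathcal{S}(T_i)-1$.

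The only delicate point, and the step I would check most carefully, is that $\nu$ does not silently drop from higher to lower when $\mu^*$ is detached: this is exactly where the homogeneity of $T_i$ together with its preservation (Lemma~\ref{le:homogeneous-preserved}) comes in, since reattaching the internal children $\nu_1,\dots,\nu_h$ of $\mu^*$ to $\nu$ guarantees $\nu$ still has an internal-node child. I expect no other obstacle, as the edge subdivision introduces only leaf-level vertices confined to the two new lower clusters and therefore cannot create or destroy any higher node apart from $\mu^*$.
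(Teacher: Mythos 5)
Your proof is correct and is essentially the paper's argument: both perform a bookkeeping of the higher non-root nodes, showing that exactly $\mu^*$ (higher because its children $\nu_1,\dots,\nu_h$ are internal, $T_i[\mu^*]$ being flat) disappears while no other node flips between higher and lower, the paper phrasing this via the equivalent characterization of higher nodes as non-root nodes $\mu$ with $h(T[\mu])>1$ and the observation that the transformation only reduces subtree heights. One cosmetic remark: your appeal to homogeneity and Lemma~\ref{le:homogeneous-preserved} for $\nu$ remaining higher is superfluous, since, as you yourself note, reattaching the internal nodes $\nu_1,\dots,\nu_h$ (internal because $T_i[\mu^*]$ is flat of height~$2$, with $h\geq 1$) already guarantees that $\nu$ keeps an internal-node child.
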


\begin{lemma}\label{le:final-flat}
The c-graph $C_f=C_{\mathcal{S}(T)}$ is flat.
\end{lemma}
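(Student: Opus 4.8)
The plan is to combine the two preceding lemmas through a short double induction on the index $i$ of the sequence $C_0, C_1, \dots, C_{\mathcal{S}(T)}$, and then invoke the characterization recorded earlier in the preliminaries, namely that a homogeneous tree is flat if and only if its size equals $0$.

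First I would establish, by induction on $i$, that every tree $T_i$ in the sequence is homogeneous. The base case holds because $T_0=T$ is homogeneous by Lemma~\ref{le:preconditions}; the inductive step is exactly the content of Lemma~\ref{le:homogeneous-preserved}, which guarantees that homogeneity is preserved by each transformation. Next, again by induction on $i$ and using Lemma~\ref{le:size-decreases}, I would show that $\mathcal{S}(T_i)=\mathcal{S}(T)-i$ for every $i=0,1,\dots,\mathcal{S}(T)$: the base case is trivial, and each step decreases the size by exactly one. Setting $i=\mathcal{S}(T)$ then yields $\mathcal{S}(T_f)=\mathcal{S}(T_{\mathcal{S}(T)})=\mathcal{S}(T)-\mathcal{S}(T)=0$.

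With these two facts in hand, the conclusion is immediate: $T_f$ is homogeneous and has size $0$, so by the observation that a homogeneous tree is flat precisely when its size vanishes, $T_f$ is flat, and hence $C_f$ is a flat c-graph.

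The one point that requires care — and which I would flag as the only potential obstacle — is verifying that the construction producing $C_{i+1}$ from $C_i$ is actually applicable at each index $i=0,\dots,\mathcal{S}(T)-1$, since that transformation is defined only for a homogeneous, non-flat tree of height $h(T_i)\geq 2$. Homogeneity is already handled by the first induction. For the remaining conditions, I would observe that for every such $i$ we have $\mathcal{S}(T_i)=\mathcal{S}(T)-i\geq 1>0$, so $T_i$ is non-flat by the same characterization; moreover $\mathcal{S}(T_i)>0$ forces the existence of a higher node $\mu\neq r(T_i)$, whose internal-node child and that child's children witness leaves or nodes at depth at least two, giving $h(T_i)\geq 2$. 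This confirms the sequence is well defined all the way to $C_{\mathcal{S}(T)}=C_f$, completing the argument.
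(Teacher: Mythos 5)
Your proof is correct and follows essentially the same route as the paper's: combine Lemma~\ref{le:preconditions} (homogeneity of $T$) with Lemma~\ref{le:homogeneous-preserved} and Lemma~\ref{le:size-decreases} to get that $T_{\mathcal{S}(T)}$ is homogeneous of size $0$, then invoke the preliminary observation that a homogeneous tree is flat if and only if its size is $0$. Your extra verification that each transformation is applicable (i.e., that $\mathcal{S}(T_i)>0$ forces $T_i$ to be non-flat of height at least $2$, so Lemma~\ref{le:homogeneous-subtree} applies) is left implicit in the paper and is a welcome, correct addition rather than a different approach.
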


\begin{figure}[tb]
\centering
\subfigure[]{\includegraphics[page=1,height=0.28\textwidth]{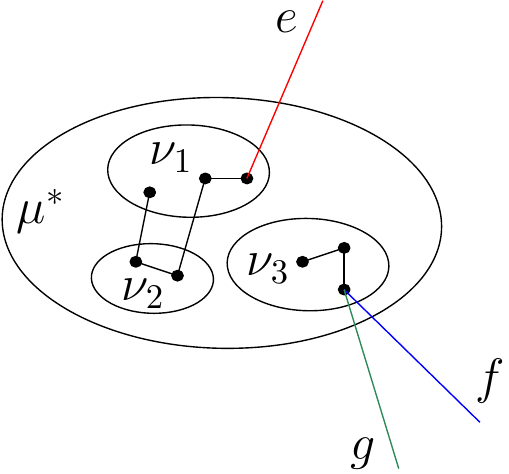}\label{fi:c2fc1}}
\hfil
\subfigure[]{\includegraphics[page=2,height=0.28\textwidth]{figures/c2fc.pdf}\label{fi:c2fc2}}
\caption{\subref{fi:c2fc1} A c-planar drawing $\Gamma(C_i)$ of c-graph $C_i$. \subref{fi:c2fc2} The construction of a c-planar drawing $\Gamma(C_{i+1})$.}\label{fi:c2fc}
\end{figure}

The proof of the following lemma is given here under two simplifying hypotheses 
\ifARXIV
(the proof of the general case can be found in \ref{ap:lemma}):
\else
(the proof of the general case can be found in \cite{cp-cpfcp-tr-18}):
\fi

\begin{quote}
\begin{enumerate}
\item[\Hconnected:] The underlying graph $G_i$ is connected 
\item[\Hnotroot:] Cluster $\nu$ is not the root of~$T$
\end{enumerate} 
\end{quote}

Observe that Hypothesis \Hconnected implies that also $G_{i+1}$ is connected.
Observe, also, that Hypothesis \Hnotroot and Property~\ref{le:preconditions2} of Lemma~\ref{le:preconditions} imply that there is at least one vertex of~$G_i$ that is not part of $\nu$ (this hypothesis is not satisfied, for example, by the c-graph depicted in Fig.~\ref{fi:c-graph-c-i}).

\begin{lemma}\label{le:main}
$C_i(G_i,T_i)$ is c-planar if and only if $C_{i+1}(G_{i+1},T_{i+1})$ is c-planar.
\end{lemma}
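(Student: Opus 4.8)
The plan is to prove the two implications separately. The forward direction, that c-planarity of $C_i$ implies c-planarity of $C_{i+1}$, is the constructive one illustrated in Fig.~\ref{fi:c2fc}: starting from a c-planar drawing $\Gamma(C_i)$, I would keep every vertex, edge, and cluster region unchanged except near the boundary of $R(\mu^*)$. For each inter-cluster edge $e=(u,v)$ of $\mu^*$ I place the subdivision vertex $e_\chi$ on $e$ just inside the boundary of $R(\mu^*)$ and $e_\varphi$ just outside it, so that the subdivided path follows the original curve of $e$. I would then draw $R(\chi)$ as a thin simply connected region hugging the inner side of the old boundary and collecting all the $e_\chi$ vertices, leaving a small gap so that it stays a disk and does not enclose $R(\nu_1),\dots,R(\nu_h)$; symmetrically $R(\varphi)$ is drawn just outside, collecting the $e_\varphi$ vertices. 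Finally I delete the curve bounding $R(\mu^*)$ and enlarge $R(\nu)$ slightly so that it contains the promoted children together with $\chi$ and $\varphi$. The three c-planarity conditions are then routine to verify: subdivision introduces no edge crossings, the new thin boundaries sit between the $\nu_j$'s and the old boundary of $\mu^*$ and hence meet no other cluster border, and each subdivided segment crosses the boundary of $R(\chi)$ and of $R(\varphi)$ exactly once.

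For the reverse direction I would start from a c-planar drawing $\Gamma(C_{i+1})$ and recover $C_i$ by contracting every path $(u,e_\chi)(e_\chi,e_\varphi)(e_\varphi,v)$ back into the edge $e=(u,v)$ (smoothing the two degree-$2$ vertices preserves planarity) and then reconstructing the region $R(\mu^*)$. The crucial structural observation is that, by construction, \emph{every} edge of $G_{i+1}$ leaving the vertex set $W=\nu_1\cup\dots\cup\nu_h$ is incident to a vertex of $\chi$, so $W\cup\chi$ is joined to the rest of the graph exclusively through the matching edges $(e_\chi,e_\varphi)$. Hence it suffices to exhibit a Jordan curve $\gamma$, lying inside $R(\nu)$, that encloses exactly the regions $R(\nu_1),\dots,R(\nu_h),R(\chi)$, excludes $R(\varphi)$ and every sibling region $R(\mu_1),\dots,R(\mu_k)$, meets no cluster boundary, and crosses each matching edge exactly once. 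Setting $R(\mu^*)$ to be the disk bounded by $\gamma$ then yields a c-planar drawing of $C_i$: after the contraction $\gamma$ crosses only the $(e_\chi,e_\varphi)$ portion of each subdivided path, hence meets each inter-cluster edge of $\mu^*$ exactly once, and it contains precisely the leaves of $\mu^*$.

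The main obstacle is proving that such a separating curve exists, which amounts to showing that the drawing of $W\cup\chi$ does not \emph{trap} any vertex of $\varphi$ or of the siblings $\mu_l$ inside one of its bounded faces. This is exactly where the hypotheses enter: Hypothesis \Hconnected guarantees that $G_{i+1}$ is connected, and Hypothesis \Hnotroot together with Property~\ref{le:preconditions2} of Lemma~\ref{le:preconditions} guarantees a vertex drawn outside $R(\nu)$. I would argue that a trapped ``outside'' vertex would have to reach this external vertex along a path that, being drawn inside a closed curve formed by edges of $W\cup\chi$, can cross that curve only at a vertex of $W\cup\chi$, and can enter it from the outside region only through a matching edge incident to some $e_\chi$; since such an $e_\chi$ sends only one further edge, which goes inward into $W$, the path cannot escape, contradicting planarity. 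Once trapping is excluded, I would obtain $\gamma$ as the boundary of a thin neighborhood of the drawn subgraph induced by $W\cup\chi$, filling in its bounded faces and pushing $\gamma$ across each matching edge once. Ensuring that this neighborhood is genuinely simply connected, so that $R(\mu^*)$ is a legal disk-shaped cluster region, is the delicate point, and it is precisely the part whose general treatment (without the two simplifying hypotheses) is deferred to the appendix.
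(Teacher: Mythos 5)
Your overall architecture is the same as the paper's: the forward direction via thin regions hugging the boundary of $R(\mu^*)$, and the reverse direction via showing that the drawn subgraph on $W\cup\chi$ traps no foreign material and then carving $R(\mu^*)$ out of a neighborhood of it. However, the pivotal ``no trapping'' step is argued incorrectly. It is simply false that a path entering $W\cup\chi$ ``cannot escape'': after arriving at some $e_\chi$ and being forced inward to $u\in W$, the path may wander through $W$ and leave again through \emph{any other} subdivision pair, via the edges $(u',e'_\chi)$ and $(e'_\chi,e'_\varphi)$ — indeed every inter-cluster edge $(u',v')$ of $\mu^*$ supplies exactly such an exit $u',e'_\chi,e'_\varphi,v'$ with $v'$ outside $W\cup\chi$. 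So $W\cup\chi$ is not a combinatorial dead end, no contradiction with connectivity arises, and a purely graph-theoretic argument cannot close this step, because the obstruction is topological, not combinatorial.

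The idea you are missing (and the one the paper's proof actually rests on) is to track \emph{faces}: all vertices $e_\varphi$ lie in the single simple region $R(\varphi)$, and no edge of the subgraph $G_{\mu^*}$ induced on $W\cup\chi$ can enter $R(\varphi)$ (it would cross its boundary twice, violating condition~(iii)), so $R(\varphi)$ — and hence the far end of \emph{every} matching edge — lies in one fixed face $f_\varphi$ of $\Gamma(G_{\mu^*})$. Therefore a path that passes through $W\cup\chi$ always re-emerges into the same face $f_\varphi$. Since $\Gamma(G_{\mu^*})$ lies inside $R(\nu)$, everything exterior to $R(\nu)$ lies in its outer face, and \Hconnected together with \Hnotroot (which, with Property~\ref{le:preconditions2} of Lemma~\ref{le:preconditions}, guarantees a vertex outside $\nu$) then forces $f_\varphi$ to be the outer face; only at that point does it follow that no sibling $\mu_l$, no part of $R(\varphi)$, and no outside vertex sits in an internal face, which is what legitimizes your curve $\gamma$. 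Two further corrections: even under \Hconnected the subgraph induced on $W\cup\chi$ need not be connected (each of its components merely contains some $e_\chi$), so $\gamma$ must bound a neighborhood of the union of $\Gamma(G_{\mu^*})$ with the whole region $R(\chi)$, as in the paper, not of the induced subgraph alone; and the appendix deferral in the paper is for dropping the two hypotheses (floating components placed into small disks inside $R(\chi)$ and $R(\varphi)$, and rerouting edges to move $R(\varphi)$ to the outer face), not merely for checking that the neighborhood is simply connected.
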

\begin{sketch}
The first direction of the proof is straightforward. Let $\Gamma(C_i)$ be a c-planar drawing of $C_i$ (refer to Fig.~\ref{fi:c2fc1}). We show how to construct a c-planar drawing of $C_{i+1}$ (refer to Fig.~\ref{fi:c2fc2}). Consider the region $R(\mu^*)$ that contains $R(\nu_i)$, with $i=1,\dots,h$. The boundary of $R(\mu^*)$ is crossed exactly once by each inter-cluster edge of $\mu^*$. Identify outside the boundary of $R(\mu^*)$ two arbitrarily thin regions $R(\chi)$ and $R(\varphi)$ that turn around $R(\mu^*)$ and that intersect exactly once all and only the inter-cluster edges of $\mu^*$. Insert into each inter-cluster edge $e$ of $\mu^*$ two vertices $e_\chi$ and $e_\varphi$, placing $e_\chi$ inside $R(\chi)$ and $e_\varphi$ inside $R(\varphi)$. By ignoring $R(\mu^*)$ you have a c-planar drawing $\Gamma(C_{i+1})$ of $C_{i+1}$.

\begin{figure}[tb]
\centering
\includegraphics[page=1,width=0.70\textwidth]{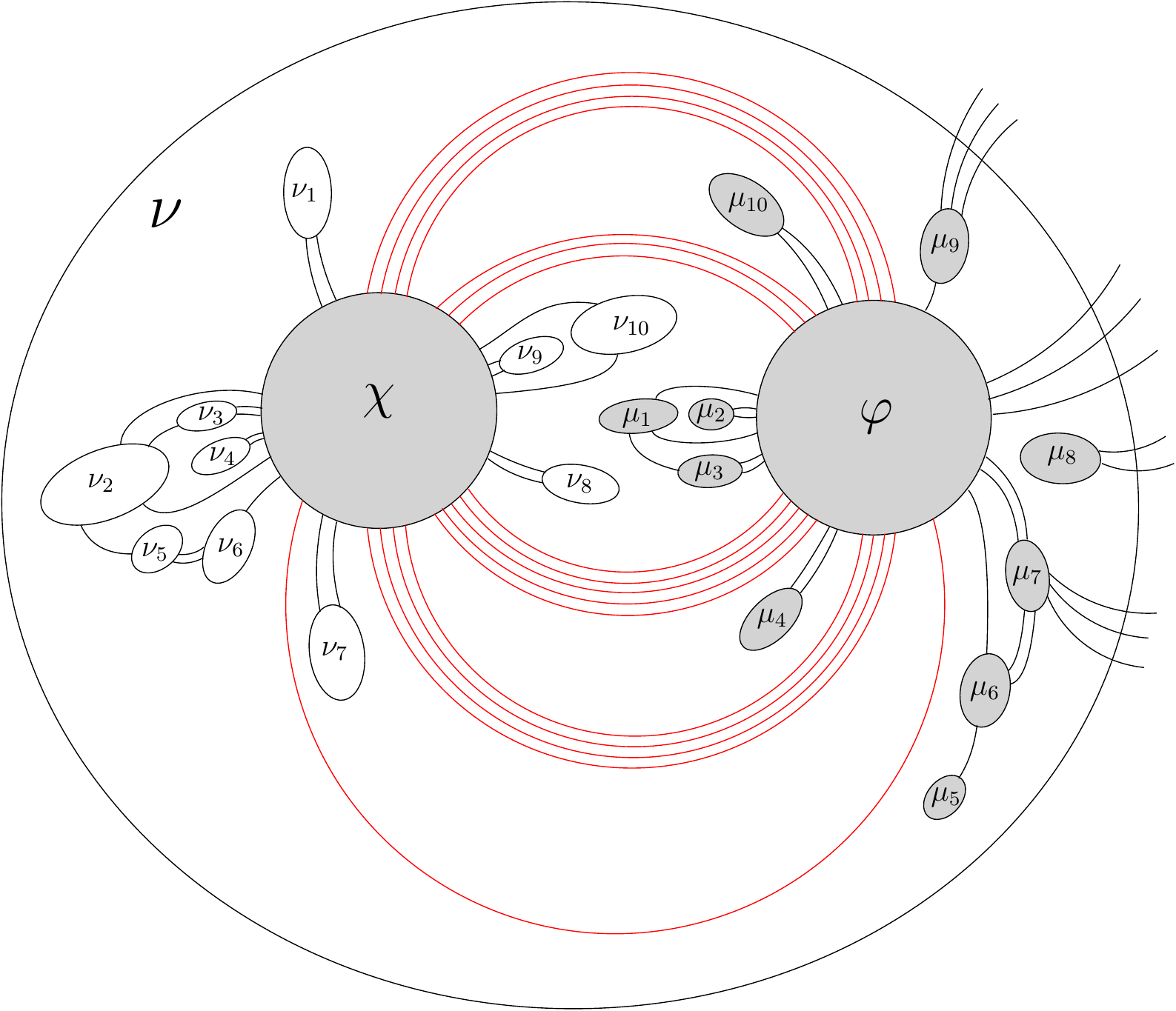}
\caption{A c-planar drawing of clusters $\nu$, $\chi$, and $\varphi$ in $\Gamma(C_{i+1})$.}\label{fi:fc2c1}
\end{figure}

Suppose now to have a c-planar drawing $\Gamma(C_{i+1})$ of $C_{i+1}$. We show how to construct a c-planar drawing $\Gamma(C_i)$ of $C_{i}$ under the Hypotheses \Hconnected and \Hnotroot. Consider the regions $R(\chi)$ and $R(\varphi)$ inside $R(\nu)$ (refer to Fig.~\ref{fi:fc2c1}). Regions $R(\chi)$ and $R(\varphi)$ are joined by the $p$ inter-cluster edges introduced when replacing each inter-cluster edge $e_i$ of $\mu^*$, where $i=1, \dots, p$, with a path (red edges of Fig.~\ref{fi:fc2c1}). Such inter-cluster edges of $\chi$ and $\varphi$ partition $R(\nu)$ into $p$ regions that have to host the remaining children of $\nu$ and the inter-cluster edges among them. In particular, $p-1$ of these regions are bounded by two inter-cluster edges and two portions of the boundaries of $R(\chi)$ and $R(\varphi)$. One of such regions, instead, is also externally bounded by the boundary of $R(\nu)$.   

Now consider the regions $R(\nu_i)$ corresponding to the children $\nu_i$ of $\nu$, with $i=1,\dots,h$, that were originally children of $\mu^*$. These regions (filled white in Fig.~\ref{fi:fc2c1}) may have inter-cluster edges among them and may be connected to $\chi$, but by construction cannot have inter-cluster edges connecting them to $\varphi$, or connecting them to the original children $\mu_i \neq \mu^*$ of $\nu$, or exiting the border of $R(\nu)$. In particular, due to Hypothesis \Hconnected, these regions must be directly or indirectly connected to~$\chi$. Finally, consider the regions $R(\mu_i)$ corresponding to the original children $\mu_i \neq \mu^*$ of $\nu$ (filled gray in Fig.~\ref{fi:fc2c1}). These regions may have inter-cluster edges among them, connecting them to $\varphi$, or connecting them to the rest of the graph outside $\nu$. In particular, due to Hypotheses \Hconnected and \Hnotroot, each $\mu_i$ (and also $\varphi$) must be directly or indirectly connected to the border of $R(\nu)$. It follows that the drawing in $\Gamma(C_{i+1})$ of the subgraph $G_{\mu^*}$ composed by the regions of $\chi, \nu_1, \nu_2, \dots, \nu_h$ and their inter-cluster edges cannot contain in one of its internal faces any other cluster of $\nu$. Hence, the sub-region $R(\mu^*)$ of $R(\nu)$ that is the union of $R(\chi)$ and the region enclosed by $G_{\mu^*}$ is a closed and simple region that only contains the regions $R(\nu_1)$, \dots $R(\nu_h)$ plus the region $R(\chi)$ and all the inter-cluster edges among them (see Fig.~\ref{fi:fc2c2}). By ignoring $R(\chi)$ and $R(\varphi)$ and by removing vertices $e_\chi$ and $e_\varphi$ and joining their incident edges we obtain a c-planar drawing $\Gamma(C_i)$.   
\end{sketch}


\begin{figure}[tb]
\centering
\includegraphics[page=2,width=0.70\textwidth]{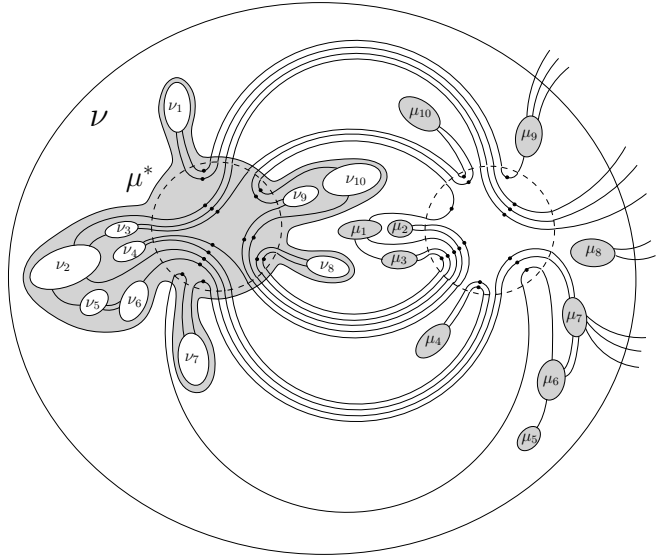}
\caption{The drawing of cluster $\mu$ in $\Gamma(C_i)$ corresponding to the drawing $\Gamma(C_{i+1})$ of Fig.~\ref{fi:fc2c1}.}\label{fi:fc2c2}
\end{figure}

The proof of Theorem~\ref{th:main} descends from Lemmas~\ref{le:final-flat} and~\ref{le:main} and from the consideration that each construction of $C_{i+1}$ from $C_i$ takes at most $O(n)$ time and, hence, the time needed to construct $C_f$ is $O(n^2)$. Due to the $O(n+c)$-time preprocessing (Lemma~\ref{le:preconditions}), the overall time complexity of the reduction is $O(n^2 + c)$.

\section{Remarks about Theorem~\ref{th:main}}\label{se:discussion}

In this section we discuss some consequences of Theorem~\ref{th:main} that descend from the properties of the reduction described in Section~\ref{se:reduction}. Such properties are summarized in the following lemma.

\begin{lemma}\label{le:properties}
Let $C(G,T)$ be an $n$-vertex clustered graph with $c$ clusters. The flat clustered graph $C_f(G_f,T_f)$ equivalent to $C$ built as described in the proof of Theorem~\ref{th:main} has the following properties:
\begin{enumerate}
\item\label{pr:subdivision} Graph $G_f$ is a subdivision of $G$
\item\label{pr:edges} Each edge of $G$ is replaced by a path of length at most $4h(T)-8$
\item\label{pr:vertices} The number of vertices of $G_f$ is $n_f \in O(n \cdot h(T))$
\item\label{pr:clusters} The number of clusters of $C_f$ is $c_f = c + \mathcal{S}(T)$
\end{enumerate}
\end{lemma}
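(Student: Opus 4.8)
The plan is to establish the four properties by tracking, step by step, the effect of a single transformation $C_i \to C_{i+1}$ described in Section~\ref{se:reduction} and then summing over the $\mathcal{S}(T)$ steps of the sequence $C_0,\dots,C_{\mathcal{S}(T)}=C_f$. Properties~(\ref{pr:subdivision}) and~(\ref{pr:clusters}) follow immediately from the definition of one step, while~(\ref{pr:vertices}) is a direct consequence of~(\ref{pr:edges}); the real work lies in proving~(\ref{pr:edges}).

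For Property~(\ref{pr:subdivision}) I would observe that a single step modifies $G_i$ only by replacing each inter-cluster edge $e$ of $\mu^*$ with the path $(u,e_\chi)(e_\chi,e_\varphi)(e_\varphi,v)$, i.e.\ by inserting two degree-two vertices: this is exactly a subdivision. Since a composition of subdivisions is again a subdivision, $G_f$ is a subdivision of $G$. For Property~(\ref{pr:clusters}), each step removes the single cluster $\mu^*$ and creates the two clusters $\chi$ and $\varphi$ (the children $\nu_1,\dots,\nu_h$ are only re-parented, not created), for a net increase of one cluster; as there are exactly $\mathcal{S}(T)$ steps (Lemma~\ref{le:size-decreases}), we get $c_f = c + \mathcal{S}(T)$.

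The main obstacle is Property~(\ref{pr:edges}): bounding how many times a fixed edge $e=(u,v)$ of $G$ is subdivided over the whole sequence. The key claim I would prove is that $e$ is subdivided \emph{exactly once} for each higher node $\mu\neq r(T)$ of the \emph{original} tree $T$ that separates $u$ from $v$ (that is, contains exactly one of them). The delicate point is that the tree keeps changing, so two facts must be argued. First, lower nodes and the newly created clusters $\chi,\varphi$ are never selected as $\mu^*$, since their subtrees have height $1$ and are therefore not flat; hence only higher nodes are ever processed, each at most once. Second, at the step in which a higher node $\mu$ is processed, the current path representing $e$ crosses the boundary of $\mu$ along exactly one of its sub-edges. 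For this I would prove, by induction over the steps, that the portion of $e$'s path lying inside $\mu$ is always a contiguous subpath containing the endpoint of $e$ that belongs to $\mu$: every subdivision performed inside $\mu$ (while $\mu$ still exists) inserts its two new vertices into clusters $\chi,\varphi$ that are themselves inside $\mu$, so it cannot disconnect this inside-portion. Consequently exactly one sub-edge of $e$ leaves $\mu$, so $e$ is subdivided precisely once when $\mu$ is processed and at no other step.

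Granting the claim, the number of subdivisions of $e$ equals the number $s_e$ of higher nodes $\neq r(T)$ separating $u$ from $v$. Writing $w$ for the lowest common ancestor of $u$ and $v$, these nodes lie on the two tree paths running from $w$ down to $u$ and from $w$ down to $v$; on each side the deepest separating ancestor is the parent of a leaf, hence a lower node, so there are at most $h(T)-2$ higher separating nodes per side and $s_e \le 2h(T)-4$. Since each subdivision inserts two vertices, at most $4h(T)-8$ vertices are inserted into $e$, which yields Property~(\ref{pr:edges}). Finally, for Property~(\ref{pr:vertices}) I would use that $G$ is planar, so it has $O(n)$ edges, and that by~(\ref{pr:edges}) each edge contributes $O(h(T))$ new vertices; therefore $n_f = n + O(n\cdot h(T)) = O(n\cdot h(T))$.
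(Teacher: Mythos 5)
Your proposal is correct and follows essentially the same route as the paper: Properties~(\ref{pr:subdivision}), (\ref{pr:vertices}), and~(\ref{pr:clusters}) are argued identically, and for Property~(\ref{pr:edges}) you use the paper's own counting (two vertices inserted per higher-cluster boundary removed, at most $2h(T)-4$ such boundaries separating the endpoints when their lowest common ancestor is the root). The only difference is that you make explicit, via induction over the steps, the fact the paper leaves implicit --- that each separating higher node is processed exactly once and contributes exactly one subdivision event, since the path's portion inside a surviving cluster stays contiguous --- which is a welcome tightening rather than a different argument.
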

\begin{proof}
Regarding Property~\ref{pr:subdivision}, observe that, for $i=1, \dots, \mathcal{S}(T)$, each $G_i$ is obtained from $G_{i-1}$ by replacing edges with paths. Hence $G_{\mathcal{S}(T)}=G_f$ is a subdivision of $G_0=G$. 
To prove Property~\ref{pr:edges} observe that each time an edge $e$ is subdivided, a pair of vertices $e_\chi$ and $e_\varphi$ is inserted and that edges are subdivided when the boundary of a higher cluster is removed. Edges that traverse more boundaries are those that link two vertices whose lowest common ancestor is the root of $T$. These edges traverse $2h(T)-4$ higher-cluster boundaries in~$C$. Hence, the number of vertices inserted into these edges is $4h(T)-8$. 
Property~\ref{pr:vertices} can be proved by considering that $G$ has $O(n)$ edges and each edge, by Property~\ref{pr:edges}, is replaced by a path of length at most $O(h(T))$.
Finally, Property~\ref{pr:clusters} descends from the fact that at each step $C_{i+1}$ has exactly one cluster more than $C_{i}$, since new clusters $\chi$ and $\varphi$ are inserted but cluster $\mu^*$ is removed.
\end{proof}

An immediate consequence of Property~\ref{pr:subdivision} of Lemma~\ref{le:properties} is that the number of faces of $G_f$ is equal to the number of faces of $G$. Also, if $G$ is connected, biconnected, or a subdivision of a triconnected graph, $G_f$ is also connected, biconnected, or a subdivision of a triconnected graph, respectively. If $G$ is a cycle or a tree, $G_f$ is also a cycle or a tree, respectively. Hence, the complexity of \textsc{Clustered Planarity} restricted to these kinds of graphs can be related to the complexity of \textsc{Flat Clustered Planarity} restricted to the same kinds of graphs.
Further, since a subdivision preserves the embedding of the original graph, the problem of deciding whether a c-graph $C(G,T)$ admits a c-planar drawing where $G$ has a fixed embedding is polynomially equivalent to deciding whether a flat c-graph $C_f(G_f,T_f)$ admits a c-planar drawing where $G_f$ has a fixed embedding.

By the above observations some results on flat clustered graphs can be immediately exported to general c-graphs. Consider for example the following.

\begin{theorem}(\hspace{-0.1em}\cite[Theorem~1]{cdfk-atcefcg-14}).\label{th:brillo-original}
There exists an $O(n^3)$-time algorithm to test the c-planarity of an n-vertex embedded flat c-graph $C$ with at most two vertices per cluster on each face.
\end{theorem}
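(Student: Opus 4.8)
The plan is to characterize c-planarity of the embedded flat c-graph $C(G,T)$ as an augmentation problem and then exploit the two-vertices-per-face bound to make the search polynomial. Recall the standard fact that a flat c-graph is c-planar if and only if $G$ can be augmented, by adding edges whose endpoints lie in a common cluster, into a graph $G^{+}$ that is planar and in which every cluster induces a connected subgraph, so that the resulting \emph{c-connected} c-graph is c-planar; for a c-connected instance this last test is handled by the standard Feng--Cohen--Eades characterization. Since here the embedding of $G$ is fixed, I would first restrict attention to augmentations that respect it: a candidate \emph{saturating chord} joining two vertices $u,v$ of a cluster $\mu$ may be routed only inside a face of $G$ incident to both $u$ and $v$. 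The testing problem thus becomes the selection of a set of face-routable chords that (i)~are pairwise non-crossing inside every face and (ii)~make every not-yet-connected cluster connected.

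The key step is to use the hypothesis to bound and structure the candidate chords. On each face $f$ every cluster has at most two vertices, so at most one chord of a given cluster can be drawn inside $f$, and when it exists it is uniquely determined by the two vertices of that cluster lying on $f$. Consequently the total number of candidate chords is $O(n)$, and inside a single face the only obstruction is that the chords of two \emph{different} clusters may cross. I would therefore record, for each face, the circular arrangement of its candidate chords together with the induced pairwise crossing conflicts, and, separately, the effect of each chord on the connected components of its own cluster.

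I would then decide feasibility by solving the resulting constrained-connectivity system. For each cluster the requirement is that the chosen chords connect all of its components, while for each face the chosen chords must form a crossing-free subfamily; because every cluster contributes at most one chord per face, these crossing constraints are essentially pairwise and local, so the global system can be resolved by repeated reachability/connectivity computations over the $O(n)$ candidate chords rather than by a general (and intractable) constrained-connectivity solver, which is what I expect to yield the claimed $O(n^{3})$ running time. The main obstacle is precisely the coupling between the two families of constraints: locally the non-crossing condition ties together chords of distinct clusters that share a face, whereas globally each cluster must end up connected. Proving that a simultaneous satisfying choice always exists exactly when a c-planar drawing does, and that it can be found within the $O(n^{3})$ budget, is the crux of the argument and the place where the two-vertices-per-face bound must be used decisively to avoid the combinatorial blow-up that leaves the unrestricted problem open.
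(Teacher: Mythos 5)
First, a framing point: the paper does not prove this statement at all --- it is quoted verbatim as Theorem~1 of \cite{cdfk-atcefcg-14} and used strictly as a black box (its only role here is as an ingredient in the proof of Theorem~\ref{th:brillo}). So your attempt can only be measured against the cited source, not against anything in this paper. Measured that way, your setup matches the known framework: the saturator characterization of embedded flat c-planarity (c-planar if and only if one can add pairwise non-crossing intra-cluster chords inside the faces of the fixed embedding so that every cluster becomes connected), together with the two easy consequences of the hypothesis --- each cluster has at most one candidate chord per face, that chord is uniquely determined by the cluster's two vertices on the face, hence there are $O(n)$ candidates overall and all conflicts are pairwise crossings between chords of distinct clusters sharing a face.

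The gap is that everything after this setup is the actual theorem, and your proposal leaves it unresolved --- indeed you flag it yourself as ``the crux.'' The assertion that the crossing constraints are ``essentially pairwise and local, so the global system can be resolved by repeated reachability/connectivity computations'' cannot be taken on faith: the unrestricted version of exactly this chord-selection problem (non-crossing per face, spanning connectivity per cluster) \emph{is} embedded flat c-planarity, whose complexity was open at the time, so the two-vertices bound must do real work and you never show where. Conflicts propagate globally: discarding a chord of cluster $\mu$ in face $f$ to save a crossing chord of another cluster forces $\mu$ to connect through alternative chords in other faces, creating new conflicts there, and nothing in your sketch rules out unbounded branching or proves that a fixpoint of ``mandatory/forbidden'' propagation decides the instance; establishing precisely this, with the $O(n^3)$ bound, is the technical content of \cite{cdfk-atcefcg-14}. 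A secondary omission: making every cluster connected is not by itself sufficient for c-planarity even with the embedding fixed --- one must also ensure that no cluster's augmented subgraph encloses vertices of other clusters (the Feng--Cohen--Eades condition). The saturator characterization you invoke absorbs this, but your constraint system records only (i) non-crossing and (ii) connectivity, so you would additionally have to argue that enclosure violations are automatically avoidable (or repairable) under the two-vertices-per-face hypothesis.
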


We generalize Theorem~\ref{th:brillo-original} to non-flat c-graphs.

\begin{theorem}\label{th:brillo}
Let $C(G,T)$ be an n-vertex c-graph where $G$ has a fixed embedding. There exists an $O(n^3 \cdot h(T)^3)$-time algorithm to test the c-planarity of $C$ if each lower cluster has at most two vertices on the same face of $G$ and each higher cluster has at most two inter-cluster edges on the same face of $G$.
\end{theorem}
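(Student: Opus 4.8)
The plan is to reduce the problem to the flat case handled by Theorem~\ref{th:brillo-original}, using the transformation of Theorem~\ref{th:main}, and then to certify that the flat instance it produces satisfies the hypotheses of Theorem~\ref{th:brillo-original}. First I would apply the reduction of Section~\ref{se:reduction} to obtain, in $O(n^2+c)$ time, an equivalent flat c-graph $C_f(G_f,T_f)$. By Property~\ref{pr:subdivision} of Lemma~\ref{le:properties}, $G_f$ is a subdivision of $G$; since a subdivision preserves the planar embedding, the fixed embedding of $G$ induces a fixed embedding of $G_f$, and (as already observed after Lemma~\ref{le:properties}) the two fixed-embedding instances are equivalent. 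Thus it suffices to run the embedded algorithm of Theorem~\ref{th:brillo-original} on $C_f$, provided $C_f$ has at most two vertices per cluster on each face of $G_f$.

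The core of the argument is verifying this last condition. The clusters of $C_f$ are of two kinds. The original lower clusters of $C$ survive unchanged, still contain only original vertices of $G$, and, because a subdivision does not alter the face structure, retain at most two vertices per face directly by the lower-cluster hypothesis. The remaining clusters are the pairs $\chi,\varphi$ created whenever a flat subtree $\mu^*$ is processed; here $\chi$ (resp.\ $\varphi$) holds exactly one subdivision vertex $e_\chi$ (resp.\ $e_\varphi$) for each inter-cluster edge $e$ of $\mu^*$ at processing time. I would then prove that each inter-cluster edge of $\mu^*$ is the segment of a subdivided path that crosses the boundary of the original higher cluster $\hat\mu$ corresponding to $\mu^*$, and that this crossing segment lies on the two faces of $G_f$ corresponding to the two faces of $G$ incident to that original inter-cluster edge of $\hat\mu$. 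Consequently the inter-cluster edges of $\mu^*$ are in a face-preserving bijection with the inter-cluster edges of $\hat\mu$, so every face of $G_f$ carries at most two of them, and hence at most two vertices of $\chi$ and at most two of $\varphi$. Since all later transformations are themselves subdivisions, they neither add $\chi$- or $\varphi$-vertices nor change the face structure, so the bound persists in $G_f$.

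With the hypotheses of Theorem~\ref{th:brillo-original} established, running its $O(n_f^3)$-time algorithm on $C_f$ completes the test. By Property~\ref{pr:vertices} of Lemma~\ref{le:properties} we have $n_f \in O(n\cdot h(T))$, so this costs $O\bigl((n\,h(T))^3\bigr)=O(n^3\,h(T)^3)$ time, which dominates the $O(n^2+c)$ cost of the reduction and yields the claimed bound. The main obstacle is exactly the face-preserving bijection between the inter-cluster edges of each processed $\mu^*$ and those of the corresponding original higher cluster $\hat\mu$: one must argue carefully that the subdivision vertices accumulated by earlier steps, although some of them may lie inside $\mu^*$, do not create spurious crossings of the boundary of $\hat\mu$, so that the per-face count of inter-cluster edges is inherited unchanged from $G$ and the at-most-two hypothesis on higher clusters translates precisely into the at-most-two-vertices-per-cluster-per-face requirement of the flat instance.
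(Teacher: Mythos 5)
Your proposal matches the paper's proof essentially step for step: both transform $C$ into the flat instance $C_f$ via the construction of Theorem~\ref{th:main}, verify that each newly created pair of lower clusters $\chi,\varphi$ inherits the at-most-two-vertices-per-face condition from the at-most-two-inter-cluster-edges-per-face hypothesis on the processed higher cluster $\mu^*$ (all other clusters being untouched), and then run Theorem~\ref{th:brillo-original} on $C_f$, bounding the time by Property~\ref{pr:vertices} of Lemma~\ref{le:properties}. The only detail the paper adds that you skip is that the homogenization preprocessing of Lemma~\ref{le:preconditions} introduces only singleton lower clusters, which trivially satisfy the condition; conversely, your explicit face-preserving bijection between the inter-cluster edges of $\mu^*$ and those of the corresponding original higher cluster makes precise an invariant that the paper leaves implicit when it applies the hypothesis to $G_i$ rather than to $G$.
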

\begin{sketch}
The proof is based on showing that, starting from a c-graph $C(G,T)$ that satisfies the hypotheses of the statement, the equivalent flat c-graph $C_f(G_f,$ $T_f)$ built as described in the proof of Theorem~\ref{th:main} satisfies the hypotheses of Theorem~\ref{th:brillo-original}. 
Hence, we first transform $C(G,T)$ into $C_f(G_f,T_f)$ in $O(n^2)$ time and then apply Theorem~\ref{th:brillo-original} to $C_f(G_f,T_f)$, which gives an answer to the c-planarity test in $O(n_f^3)$ time, which is, by Property~\ref{pr:vertices} of Lemma~\ref{le:properties}, $O(n^3 \cdot h(T)^3)$ time.    
\end{sketch}

In~\cite{degg-sfefcg-conf-17} it has been proven that \textsc{Flat Clustered Planarity} admits a subexponential-time algorithm when the underlying graph has a fixed embedding and its maximum face size~$\ell$ belongs to $o(n)$. 

\begin{theorem} (\hspace{-0.1em}\cite[Theorem 3]{degg-sfefcg-conf-17}).\label{th:giordano1}
\textsc{Flat Clustered Planarity} can be solved in $2^{O(\sqrt{\ell n}\cdot \log n)}$ time for n-vertex embedded flat c-graphs with maximum face size~$\ell$.
\end{theorem}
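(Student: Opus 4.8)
The plan is to reduce embedded \textsc{Flat Clustered Planarity} to an embedding-preserving augmentation problem and then to solve that problem by dynamic programming over a sphere-cut (branch) decomposition of a planar auxiliary graph whose branchwidth is $O(\sqrt{\ell n})$. The first step is to invoke the standard characterization of c-planarity for embedded flat c-graphs: $C(G,T)$ admits a c-planar drawing respecting the given embedding of $G$ if and only if one can add a set of \emph{saturating chords}, each drawn inside a single face of $G$ and joining two vertices of the same non-root cluster, so that the augmentation keeps $G$ planar with the same rotation system and every non-root cluster induces a connected subgraph. Since for flat c-graphs the connected case is easy to certify, this reduces the whole problem to deciding whether such a planar, cluster-connecting system of intra-face chords exists.

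Next I would encode the admissible chord systems geometrically. Inside each face $f$ of size $\ell_f$, a non-crossing set of chords among the at most $\ell_f$ occurrences of vertices on $\partial f$ can be represented by a planar gadget of size $O(\ell_f^2)$: a grid-like diagram whose elements correspond to the $O(\ell_f^2)$ candidate chords and whose planar non-crossing configurations are in bijection with the admissible chord sets of $f$. Gluing these gadgets into $G$ along shared edges yields a planar auxiliary graph $H$ with $|V(H)| = O\!\left(\sum_f \ell_f^2\right) \le O\!\left(\ell \sum_f \ell_f\right) = O(\ell n)$, using that $\sum_f \ell_f = O(n)$ for planar graphs. By the planar branchwidth bound, $H$ then has branchwidth $O(\sqrt{|V(H)|}) = O(\sqrt{\ell n})$, and a sphere-cut decomposition of this width can be computed in polynomial time.

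I would then run a dynamic program along the sphere-cut decomposition. A noose of the decomposition has length $w = O(\sqrt{\ell n})$ and crosses $O(w)$ gadget elements; the state at a noose records, for each crossed chord-slot, the cluster it serves together with the partition of these slots into the cluster-components formed so far on one side. Such a state is encodable in $O(w\log n)$ bits, so each noose carries $2^{O(\sqrt{\ell n}\,\log n)}$ states; combining the tables of two children while checking non-crossing compatibility and propagating cluster connectivity across the noose is polynomial per pair of states. Accepting exactly when all clusters are globally connected yields a decision in total time $2^{O(\sqrt{\ell n}\,\log n)}$.

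The main obstacle is twofold. In the reduction, the delicate point is to make each per-face gadget simultaneously planar, of size only $O(\ell_f^2)$, and faithful (its planar non-crossing configurations must match exactly the admissible intra-face chord systems of $f$); it is precisely the bound $\sum_f \ell_f^2 \le \ell \sum_f \ell_f = O(\ell n)$, where the maximum face size $\ell$ is used, that keeps the branchwidth of $H$ at $O(\sqrt{\ell n})$ rather than the larger $O(\ell\sqrt{n})$ one would get from a cruder encoding. In the dynamic program, the delicate point is the correct and compact bookkeeping of cluster connectivity across nooses: one must argue that only $2^{O(w\log n)}$ cluster-component partitions are relevant along a noose of length $w$, so that the single extra $\log n$ factor in the exponent suffices and no heavier representative-set or rank-based machinery is required.
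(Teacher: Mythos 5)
You should first be aware that the paper contains no proof of this statement at all: Theorem~\ref{th:giordano1} is imported verbatim from \cite[Theorem~3]{degg-sfefcg-conf-17} and is used purely as a black box to derive Theorem~\ref{th:giordano1-extension}. So there is no in-paper proof to compare your attempt against; what you have written is a blind reconstruction of the cited paper's argument. As such a reconstruction it is in the right spirit---the known proof does go through the saturating-edges characterization and a dynamic program whose running time is governed by a planar structure of size $O(\ell n)$, yielding $n^{O(\sqrt{\ell n})}=2^{O(\sqrt{\ell n}\log n)}$---but judged as a proof your sketch has two genuine gaps.

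First, the characterization you ``invoke'' is false as stated. Planarity-preserving intra-face chords making every non-root cluster connected do \emph{not} suffice for c-planarity: take a triangle whose three vertices form one cluster and an isolated vertex of a second cluster embedded inside the triangle's inner face. Every cluster is already connected, no chords are needed, the embedding is planar, yet the instance is not c-planar, because the cluster region containing the triangle necessarily contains the foreign vertex. The correct statement must additionally forbid cycles of a cluster enclosing vertices of other clusters; this can be salvaged (a minimal saturator connects distinct components of a cluster and hence creates no new intra-cluster cycles, so the enclosure condition becomes a polynomial-time checkable property of the input), but that argument is missing, and without it your reduction decides the wrong problem. Second, the per-face gadget is the entire technical content of the algorithm and you only assert it: you need (i) a proof that a polynomial-size candidate set of chords (e.g., chords between suitably chosen occurrences of the same cluster on $\partial f$, with one vertex possibly occurring many times on the boundary) is both sound and complete for cluster connectivity, since ``admissible chord systems'' a priori range over arbitrary same-cluster pairs; and (ii) a construction in which the planarization of the candidate chords' crossing pattern has size $O(\ell_f^2)$ per face \emph{and} exposes enough information at a noose for the connectivity bookkeeping---a noose cutting through the middle of a large face must let the state certify which boundary occurrences are already joined through chords on either side. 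Your counting ($\sum_f \ell_f^2 \le \ell\sum_f \ell_f = O(\ell n)$, branchwidth $O(\sqrt{\ell n})$, $2^{O(w\log n)}$ labeled partitions per noose) is fine, but it rests entirely on these two unproven steps.
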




The authors of~\cite{degg-sfefcg-conf-17} ask whether their results can be generalized to non-flat c-graphs. We give an affirmative answer with the following theorem.

\begin{theorem}\label{th:giordano1-extension}
\textsc{Clustered Planarity} can be solved in $2^{O(h(T) \cdot \sqrt{\ell n}\cdot \log(n \cdot h(T))}$ time for n-vertex embedded c-graphs with maximum face size $\ell$ and height~$h(T)$ of the inclusion tree.
\end{theorem}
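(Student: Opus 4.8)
The plan is to apply the same reduction strategy used in Theorem~\ref{th:brillo}: transform the given (non-flat) c-graph $C(G,T)$ into an equivalent flat c-graph $C_f(G_f,T_f)$ via the quadratic-time construction of Theorem~\ref{th:main}, and then invoke the subexponential algorithm of Theorem~\ref{th:giordano1} on $C_f$. The key point is to track how the three parameters governing the running time of Theorem~\ref{th:giordano1}, namely the number of vertices $n_f$ of $G_f$, the maximum face size $\ell_f$ of $G_f$, and the fact that $G_f$ has a fixed embedding, relate to the corresponding parameters of the original instance.

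First I would note that the embedding is preserved: since $G_f$ is a subdivision of $G$ (Property~\ref{pr:subdivision} of Lemma~\ref{le:properties}), a fixed embedding of $G$ induces a fixed embedding of $G_f$, and subdivision does not change the combinatorial structure of the faces. Next I would bound the two numerical parameters. By Property~\ref{pr:vertices} of Lemma~\ref{le:properties} we have $n_f \in O(n \cdot h(T))$. For the face size, the crucial observation is that each face of $G_f$ corresponds to a face of $G$ whose bounding edges have been subdivided; by Property~\ref{pr:edges} each edge of $G$ is replaced by a path of length at most $4h(T)-8$, so a face of $G$ of size at most $\ell$ becomes a face of $G_f$ of size at most $\ell \cdot (4h(T)-8) \in O(\ell \cdot h(T))$. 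Hence $\ell_f \in O(\ell \cdot h(T))$.

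Substituting these bounds into the running time $2^{O(\sqrt{\ell_f n_f}\cdot \log n_f)}$ of Theorem~\ref{th:giordano1} gives $2^{O(\sqrt{\ell \cdot h(T) \cdot n \cdot h(T)}\cdot \log(n\cdot h(T)))} = 2^{O(h(T)\cdot\sqrt{\ell n}\cdot \log(n\cdot h(T)))}$, which matches the claimed bound. The $O(n^2)$ preprocessing cost of the reduction is dominated by this subexponential term, so it does not affect the asymptotics.

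I expect the main obstacle to be the face-size analysis. The clean bound $\ell_f \in O(\ell \cdot h(T))$ relies on the fact that subdividing the edges on the boundary of a face only lengthens that face's boundary walk by a controlled factor, and that the newly introduced cluster-boundary vertices $e_\chi, e_\varphi$ lie on edges that were already present, so they do not create new faces. One must verify carefully that no single face of $G_f$ accumulates subdivisions from more than $O(h(T))$ traversals per original edge; this follows from Property~\ref{pr:edges}, but the bookkeeping linking the per-edge path-length bound to a per-face size bound is the delicate step and should be stated explicitly rather than left implicit.
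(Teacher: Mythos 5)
Your proposal is correct and follows essentially the same route as the paper's own proof: transform to the equivalent flat instance via Theorem~\ref{th:main}, bound $n_f \in O(n \cdot h(T))$ and $\ell_f \in O(\ell \cdot h(T))$ using Properties~\ref{pr:vertices} and~\ref{pr:edges} of Lemma~\ref{le:properties}, and substitute into the $2^{O(\sqrt{\ell_f n_f}\cdot \log n_f)}$ bound of Theorem~\ref{th:giordano1}. The face-size bookkeeping you flag as delicate is precisely the step the paper treats as immediate (subdivision multiplies each face's boundary walk by at most the per-edge path length, creating no new faces), so there is no gap.
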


\begin{sketch}
The proof is based on applying Theorem~\ref{th:giordano1} to the equivalent flat c-graph $C_f(G_f,T_f)$ built as described in the proof of Theorem~\ref{th:main}. 
\end{sketch}

Observe that Theorem~\ref{th:giordano1-extension} gives a subexponential-time upper bound for \textsc{Clustered Planarity} whenever $\ell \cdot h(T)^2 \in o(n)$.
Also observe that Theorems~\ref{th:brillo} and~\ref{th:giordano1-extension} are actual generalizations of the corresponding Theorems~\ref{th:brillo-original} and~\ref{th:giordano1}, respectively, as they yield the same bounds when applied to flat clustered graphs.





\section{Proof of Theorem~\ref{th:main2}}\label{se:reduction2}

In this section we reduce \textsc{Flat Clustered Planarity} to \textsc{Independent Flat Clustered Planarity} by applying a transformation very similar to the one described in Section~\ref{se:reduction} to each non-independent cluster.  

Let $C(G,T)$ be a flat c-graph. Let $k$ be the number of lower clusters of $C$ that are not independent. The reduction consists of a sequence of transformations of $C=C_0$ into $C_1, C_2, \dots, C_{k}$ where each $C_i$, $i=0,\dots,k$, is a flat c-graph with $k-i$ non-independent lower clusters.   

\begin{figure}[tb]
\centering
\subfigure[]{\includegraphics[page=1,height=0.28\textwidth]{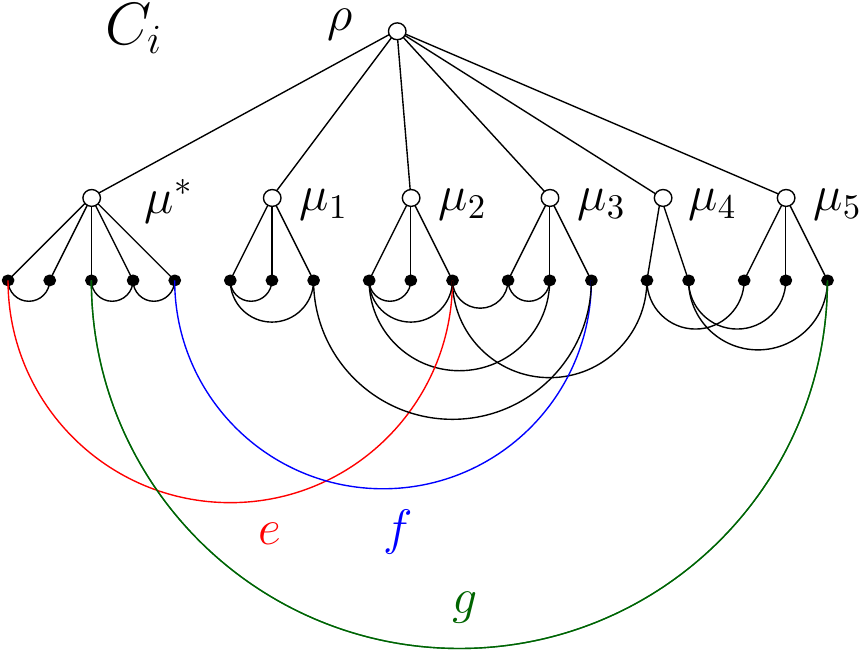}\label{fi:construction2-a}}
\hfil
\subfigure[]{\includegraphics[page=2,height=0.28\textwidth]{figures/construction2.pdf}\label{fi:construction2-b}}
\ifARXIV
\vspace{-0.02\textwidth}
\fi
\caption{\subref{fi:construction2-a} A flat c-graph $C_i$ with a non-independent cluster $\mu^*$. \subref{fi:construction2-b} The construction of $C_{i+1}$ where $\mu^*$ is replaced by independent clusters $\nu_1$, \dots, $\nu_5$, $\chi$, and $\varphi$.}\label{fi:construction2}
\end{figure}

Consider a flat c-graph $C_i(G_i,T_i)$, with $i=0, \dots, k-1$, such that $C_i$ has $k-i$ non-independent clusters and let $\mu^*$ be a non-independent cluster of $C$. We show how to construct an flat c-graph $C_{i+1}(G_{i+1},T_{i+1})$ equivalent to $C_i$ and such that $C_{i+1}$ has $k-i-1$ non-independent clusters (refer to Fig.~\ref{fi:construction2}).
Denote by $\mu_j$, with $j=1, 2, \dots, l$, those children of $r(T_i)$ such that $\mu_j \neq \mu^*$. Suppose that $\mu^*$ has children $v_1, v_2, \dots, v_h$, which are vertices of~$G_i$.

The underlying graph $G_{i+1}$ of $C_{i+1}$ is obtained from $G_i$ by introducing, for each inter-cluster edge $e=(u,v)$ of $\mu^*$, two new vertices $e_\chi$ and $e_\varphi$ and replacing $e$ with a path $(u,e_\chi)(e_\chi,e_\varphi)(e_\varphi,v)$. The inclusion tree $T_{i+1}$ of $C_{i+1}$ is obtained from $T_i$ by removing cluster $\mu^*$ and introducing, for each $j=1, 2, \dots, h$, a lower cluster $\nu_j$ child of $r(T_{i+1})$ containing only $v_j$. We also introduce two lower clusters $\chi$ and $\varphi$ as children of $r(T_{i+1})$ that contain all the vertices $e_\chi$ and $e_\varphi$, respectively, introduced when replacing each inter-cluster edge $e$ of $\mu^*$ with a path. 
It is easy to see that $C_{i+1}$ is a flat clustered graph and that it has one non-independent cluster less than $C_i$. 

We prove the following lemma assuming that Hypothesis \Hconnected holds. 
\ifARXIV
The complete proof is in~\ref{ap:reduction2}.
\else
The complete proof is in~\cite{cp-cpfcp-tr-18}.
\fi

\begin{figure}[tb]
\centering
\subfigure[]{\includegraphics[page=1,height=0.35\textwidth]{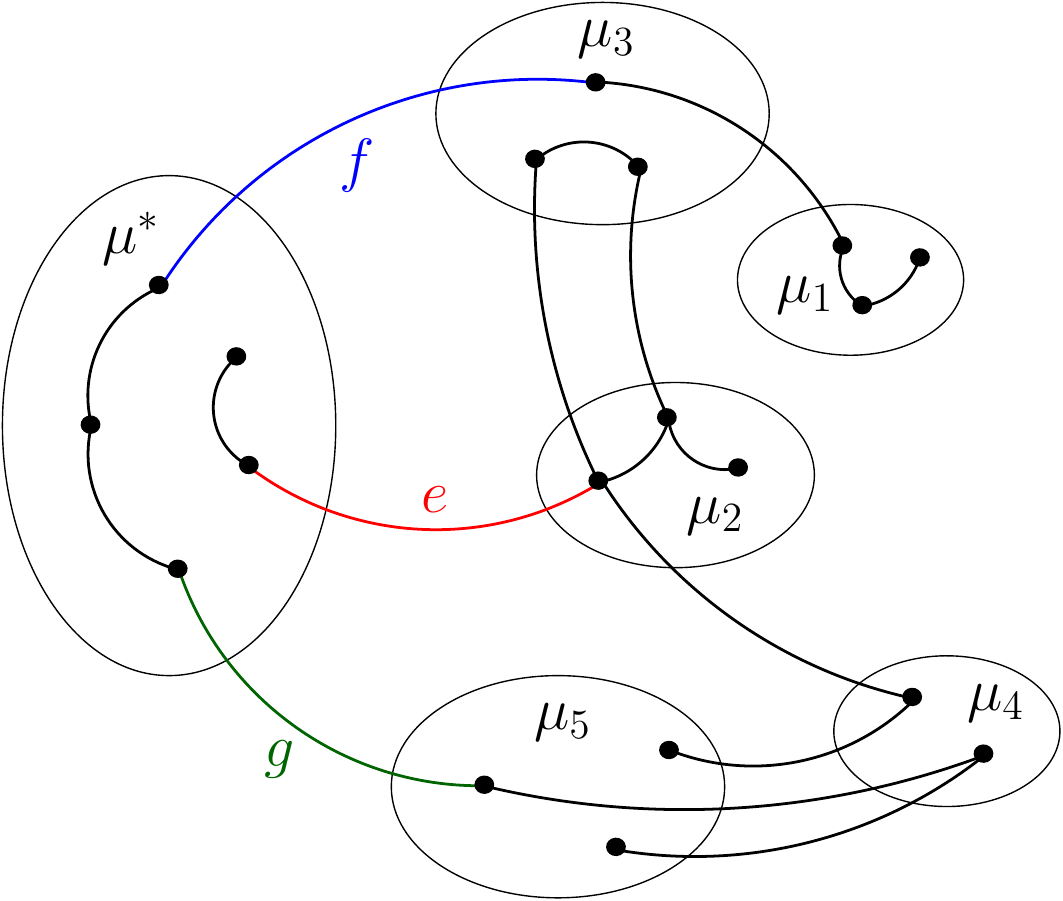}\label{fi:fc2ifc-a}}
\hfil
\subfigure[]{\includegraphics[page=2,height=0.35\textwidth]{figures/fc2ifc.pdf}\label{fi:fc2ifc-b}}
\caption{\subref{fi:fc2ifc-a} A c-planar drawing of the flat c-graph of Fig.~\ref{fi:construction2-a}. \subref{fi:fc2ifc-b} The corresponding c-planar drawing the flat c-graph of Fig.~\ref{fi:construction2-b} where the non-independent cluster $\mu^*$ is replaced by independent clusters $\nu_1$, \dots, $\nu_5$, $\chi$, and $\varphi$.}\label{fi:fc2ifc}
\end{figure}

\begin{lemma}\label{le:main2}
$C_i(G_i,T_i)$ is c-planar if and only if $C_{i+1}(G_{i+1},T_{i+1})$ is c-planar.
\end{lemma}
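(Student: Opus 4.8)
The plan is to mirror the proof of Lemma~\ref{le:main} almost verbatim, since the construction of $C_{i+1}$ from $C_i$ here is structurally identical: each inter-cluster edge $e=(u,v)$ of the target cluster $\mu^*$ is subdivided by two vertices $e_\chi,e_\varphi$ grouped into two new clusters $\chi$ and $\varphi$, and the children of $\mu^*$ are exposed as siblings of $\chi,\varphi$. The only differences are that we work in the flat setting (so $\mu^*$ is a lower cluster whose children $v_1,\dots,v_h$ are vertices, not subtrees) and that the parent of $\mu^*$ is necessarily the root $r(T_i)$, which means Hypothesis \Hnotroot is vacuous and only \Hconnected is invoked. I therefore prove both directions by adapting the two halves of the earlier sketch.

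For the forward direction I would take a c-planar drawing $\Gamma(C_i)$ (refer to Fig.~\ref{fi:fc2ifc-a}) and use the region $R(\mu^*)$, whose boundary is crossed exactly once by each inter-cluster edge of $\mu^*$. I would carve out two arbitrarily thin concentric regions $R(\chi)$ and $R(\varphi)$ just outside $\partial R(\mu^*)$, each encircling $R(\mu^*)$ and meeting all and only the inter-cluster edges of $\mu^*$ exactly once; placing $e_\chi$ inside $R(\chi)$ and $e_\varphi$ inside $R(\varphi)$ subdivides each such edge as required. Each former child vertex $v_j$ is wrapped in its own singleton region $R(\nu_j)$ inside the old $R(\mu^*)$, and discarding $\partial R(\mu^*)$ yields a valid c-planar drawing $\Gamma(C_{i+1})$ (refer to Fig.~\ref{fi:fc2ifc-b}). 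Conditions (i)--(iii) of c-planarity are immediate since the new cluster boundaries are thin curves crossing each incident edge once and enclosing exactly the prescribed vertices.

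For the reverse direction I would start from a c-planar drawing $\Gamma(C_{i+1})$ and argue that the regions $R(\chi),R(\varphi)$ together with the $p$ subdividing paths organize the interior of $R(r(T_{i+1}))$ just as in Lemma~\ref{le:main}: the paths joining $\chi$ to $\varphi$ partition the ambient region, and by construction the singleton regions $R(\nu_1),\dots,R(\nu_h)$ have inter-cluster edges only among themselves or toward $\chi$, never toward $\varphi$ or toward any $\mu_j$ or the outside. Invoking Hypothesis \Hconnected, these regions must be directly or indirectly connected to $\chi$, so the subgraph $G_{\mu^*}$ formed by $R(\chi),R(\nu_1),\dots,R(\nu_h)$ and their inter-cluster edges cannot enclose any other child of the root in an internal face. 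Hence the union of $R(\chi)$ and the region bounded by $G_{\mu^*}$ is a simple closed region enclosing exactly $v_1,\dots,v_h$; re-labelling it as $R(\mu^*)$, suppressing $R(\chi),R(\varphi)$, and smoothing out each subdivision vertex pair $e_\chi,e_\varphi$ recovers a c-planar drawing $\Gamma(C_i)$.

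The main obstacle, exactly as in the proof of Theorem~\ref{th:main}, is the reverse direction: one must rule out that some unrelated cluster $\mu_j$ (or $\varphi$) gets trapped inside the face bounded by $G_{\mu^*}$, which would prevent contracting that face back into a single region for $\mu^*$. This is precisely where connectivity is essential, and it is the reason Hypothesis \Hconnected is assumed in the lemma statement; the general case (dropping \Hconnected) requires the more delicate argument deferred to the appendix. Since here $\nu=r(T_i)$ always holds, the complementary difficulty handled by \Hnotroot in Lemma~\ref{le:main} does not arise, which is why a single hypothesis suffices.
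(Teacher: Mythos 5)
Your forward direction is correct and matches the paper's (thin concentric regions $R(\chi)$, $R(\varphi)$ outside $\partial R(\mu^*)$, plus a singleton boundary around each $v_j$). The reverse direction, however, has a genuine gap, and your closing paragraph gets the logic exactly backwards. You claim that since $\nu=r(T_i)$ always holds, ``the complementary difficulty handled by \Hnotroot does not arise,'' and you conclude from \Hconnected alone that $G_{\mu^*}$ ``cannot enclose any other child of the root in an internal face.'' That inference is invalid. In Lemma~\ref{le:main} the non-enclosure conclusion was derived from \Hconnected \emph{together with} \Hnotroot: the existence of vertices outside $\nu$ forces each $\mu_j$ and $\varphi$ to be anchored, via some path, to the border of $R(\nu)$, and only that anchoring prevents them from being trapped inside a face of $G_{\mu^*}$. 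Here $\nu$ is the root, so there is no boundary of $R(\nu)$ to anchor to, and connectivity is no obstruction to enclosure: $\varphi$ is adjacent to $\chi$ through the $p$ edges $(e_\chi,e_\varphi)$, so a perfectly valid c-planar drawing of $C_{i+1}$ can place $R(\varphi)$ --- and with it all the clusters $\mu_1,\dots,\mu_l$ --- inside an internal face of $G_{\mu^*}$ bounded in part by $\partial R(\chi)$, with the whole graph connected. This is exactly the situation depicted in Fig.~\ref{fi:ifc2fc-a}; on such a drawing your extraction step produces a candidate $R(\mu^*)$ containing vertices that do not belong to $\mu^*$, and the proof fails.

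The paper's proof handles precisely this case: before extracting $R(\mu^*)$, it modifies $\Gamma(C_{i+1})$ so that the external face becomes incident to $R(\varphi)$. Concretely, it builds the auxiliary multigraph $H$ on $\chi$ and the (anchored) $\nu_j$'s, identifies a minimal set of edges $\{e_1,\dots,e_q\}$ of $H$ whose removal would expose $R(\varphi)$ to the external face, and reroutes each $e_i$ around the current external face, which amounts to moving the external face into a face previously enclosed by $e_i$; only after this rerouting does the union of $R(\chi)$ with the region covered by $\Gamma(H)$ yield a legitimate $R(\mu^*)$. (The full appendix proof additionally dispenses with \Hconnected via the anchored/floating-region classification, but even under \Hconnected the rerouting step is not optional.) So the single hypothesis \Hconnected suffices in the statement not because the \Hnotroot difficulty vanishes, but because the paper replaces the \Hnotroot argument with this explicit redrawing step --- the very complication your proposal declares absent.
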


\begin{sketch}
The proof is very similar to the proof of Lemma~\ref{le:main}. First, we show that, given a c-planar drawing $\Gamma(C_i)$ of the flat c-graph $C_i$ it is easy to construct a c-planar drawing $\Gamma(C_{i+1})$ of $C_{i+1}$ (see, as an example, Fig.~\ref{fi:fc2ifc}). 
Second we show that, given a c-planar drawing $\Gamma(C_{i+1})$ of the flat c-graph $C_{i+1}$ it is possible to construct a c-planar drawing $\Gamma(C_{i})$ of $C_{i}$. 
This second part of the proof is complicated by the fact that, since in this case Hypothesis \Hnotroot does not apply, we may have that in $\Gamma(C_{i+1})$ the region $R(\varphi)$ is embraced by inter-cluster edges and region boundaries of $R(\nu_1)$, $R(\nu_2)$, \dots $R(\nu_l)$, and $R(\chi)$%
\ifARXIV
(see Fig.~\ref{fi:ifc2fc-a} in~\ref{ap:reduction2}). 
\else
. 
\fi
Hence, before identifying the region $R(\mu^*)$ the drawing $\Gamma(C_{i+1})$ needs to be modified so that the external face touches $R(\varphi)$. This can be easily done by 
\ifARXIV
rerouting edges (see the example in Fig.~\ref{fi:ifc2fc-b}). 
\else
rerouting edges. 
\fi
\end{sketch}

The proof of Theorem~\ref{th:main2} is concluded by showing that each $G_{i+1}$ can be obtained from $G_i$ in time proportional to the number of vertices and inter-cluster edges of~$\mu^*$, which gives an overall $O(n)$ time for the reduction. 

\section{Remarks about Theorem~\ref{th:main2}}\label{se:discussion2}

Starting from a flat c-graph, the reduction described in Section~\ref{se:reduction2} allows us to find an equivalent 
independent flat c-graph with the properties stated in the following lemma 
\ifARXIV
(the proof is in~\ref{ap:discussion2}). 
\else
(the proof can be found in~\cite{cp-cpfcp-tr-18}). 
\fi
\begin{lemma}\label{le:properties2}
Let $C_f(G_f,T_f)$ be an $n_f$-vertex flat clustered graph with $c_f$ clusters. The independent flat clustered graph $C_{\textrm{if}}(G_{\textrm{if}},T_{\textrm{if}})$ equivalent to $C_f$ built as described in the proof of Theorem~\ref{th:main2} has the following properties:
\begin{enumerate}
\item\label{pr:subdivision2} Graph $G_{\textrm{if}}$ is a subdivision of $G_f$
\item\label{pr:edges2} Each inter-cluster edge of $G_f$ is replaced by a path of length at most $4$.
\item\label{pr:vertices2} The number of vertices of $G_{\textrm{if}}$ is $O(n_f)$ 
\item\label{pr:clusters2} The number of clusters of $C_{\textrm{if}}$ (including the root) is $c_{\textrm{if}} \leq 2c_f + n_f -1$
\end{enumerate}
\end{lemma}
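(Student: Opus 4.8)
The plan is to verify the four properties by tracking the cumulative effect of the $k$ transformation steps described in Section~\ref{se:reduction2}, in close analogy with the proof of Lemma~\ref{le:properties}. Recall that each step $C_i \to C_{i+1}$ picks a non-independent lower cluster $\mu^*$ and, for every inter-cluster edge $e=(u,v)$ of $\mu^*$, subdivides $e$ into the path $(u,e_\chi)(e_\chi,e_\varphi)(e_\varphi,v)$. The key structural observation, which I would establish first, is that each original inter-cluster edge of $G_f$ is subdivided at most \emph{twice} over the course of the entire reduction: once when the cluster containing $u$ is processed and once when the cluster containing $v$ is processed. This is because the newly inserted vertices $e_\chi,e_\varphi$ are placed into the freshly created clusters $\chi,\varphi$, which are by construction independent, and hence are never selected as a $\mu^*$ in a later step.

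Granting this, Property~\ref{pr:subdivision2} is immediate: since every step replaces edges by paths and never deletes vertices or contracts edges, the composition $G_f = G_0 \to G_1 \to \cdots \to G_k = G_{\textrm{if}}$ yields a graph $G_{\textrm{if}}$ that is a subdivision of $G_f$. Property~\ref{pr:edges2} then follows from the ``at most twice'' observation: each of the two subdivisions of an inter-cluster edge inserts two new vertices (an $e_\chi$ and an $e_\varphi$), so an edge whose \emph{both} endpoints lie in non-independent clusters is subdivided into a path with at most four internal vertices, i.e.\ of length at most $4$; edges with fewer endpoints in non-independent clusters give shorter paths. Property~\ref{pr:vertices2} is a counting consequence of Property~\ref{pr:edges2}: $G_f$ has $O(n_f)$ edges, each contributing at most a constant ($4$) number of new vertices, so $|V(G_{\textrm{if}})| = n_f + O(n_f) = O(n_f)$.

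For Property~\ref{pr:clusters2} I would account for the change in cluster count at a single step and then sum. At the step that processes a non-independent cluster $\mu^*$ with, say, $h$ vertices, we remove the one cluster $\mu^*$ and add $h$ singleton clusters $\nu_1,\dots,\nu_h$ together with the two clusters $\chi,\varphi$, for a net change of $h+1$ clusters. Summing over all $k$ processed clusters, the number of added clusters is bounded by $\sum (h_{\mu^*}+1)$; since the singleton clusters $\nu_j$ correspond injectively to vertices of $G_f$ (there are at most $n_f$ of them) and the ``$+2$ minus $1$'' contributions are bounded by twice the number of original clusters, one obtains $c_{\textrm{if}} \le 2c_f + n_f - 1$ after including the root.

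The main obstacle I anticipate is not the arithmetic but rigorously justifying the ``each inter-cluster edge is subdivided at most twice'' claim, since the bound of Property~\ref{pr:edges2} hinges entirely on it. I would argue it by an invariant: throughout the reduction, every newly created cluster ($\chi$, $\varphi$, and each singleton $\nu_j$) is independent, so the only clusters ever eligible to be the chosen $\mu^*$ are the $k$ original non-independent lower clusters of $C_f$. Consequently a given edge can be ``touched'' by a processing step only when one of its two endpoints sits in one of these original clusters, which can happen at most twice. This same invariant simultaneously confirms that the process terminates after exactly $k$ steps with an independent flat c-graph, tying the properties back to the reduction's correctness.
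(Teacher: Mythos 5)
Your proof is correct and takes essentially the same route as the paper's: Property~1 from the fact that every step performs only edge subdivisions, Properties~2 and~3 from the count of two inserted vertices per processed endpoint with at most two processed endpoints per edge (note that the paper, like you, effectively measures ``path length'' by the number of inserted internal vertices), and Property~4 from the per-step accounting of $-1$ cluster, $+2$ clusters $\chi,\varphi$, and $+h$ singletons, summed over the $k \leq c_f - 1$ processed clusters. Your explicit invariant that $\chi$, $\varphi$, and the singleton clusters are independent and hence never selected as $\mu^*$ is exactly the justification the paper leaves implicit, so it is a welcome clarification rather than a different approach.
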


Also, a further property can be pursued.

\begin{observation}\label{ob:achieved}\hspace{-0.6em}\textbf{.}~
At the same asymptotic cost of the reduction described in the proof of Theorem~\ref{th:main2} it can be achieved that non-root clusters are of two types: (\textsc{Type~1}) clusters containing a single vertex of arbitrary degree or (\textsc{Type~2}) clusters containing multiple vertices of degree two. 
\end{observation}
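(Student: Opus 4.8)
The plan is to extend the reduction described in the proof of Theorem~\ref{th:main2} so that the splitting transformation is applied not only to the non-independent clusters but to \emph{every} non-root cluster containing two or more vertices. Recall that when the transformation is applied to a cluster $\mu^*$ with vertices $v_1, \dots, v_h$, it removes $\mu^*$, creates a singleton cluster $\nu_j$ for each $v_j$, and collects the newly introduced subdivision vertices into the two clusters $\chi$ and $\varphi$. Each $v_j$ retains its arbitrary degree, so every $\nu_j$ is a \textsc{Type~1} cluster; each vertex placed into $\chi$ or $\varphi$ is instead an internal vertex of a subdivided path, hence has degree exactly two, so $\chi$ and $\varphi$ are \textsc{Type~2} clusters (or \textsc{Type~1}, in the degenerate case in which $\mu^*$ has a single inter-cluster edge). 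Any cluster that is already a singleton in the input is left untouched and is already of \textsc{Type~1}.

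The first step is to observe that the transformation never uses the fact that $\mu^*$ is non-independent: its definition refers only to the vertices of $\mu^*$ and to the inter-cluster edges of $\mu^*$. Consequently I would process, one at a time, all non-root clusters that contain at least two vertices. After a cluster is processed, the clusters it spawns are never processed again: the $\nu_j$ are already singletons, and a subdivision vertex keeps degree two even if one of its incident edges is later subdivided (the edge is merely replaced by a path, leaving its incidence unchanged). Hence the procedure terminates after a single pass over the original clusters and yields a flat c-graph in which every non-root cluster is of \textsc{Type~1} or \textsc{Type~2}.

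The equivalence of each intermediate c-graph with its predecessor is established exactly as in Lemma~\ref{le:main2}, under Hypothesis \Hconnected. Indeed, the correctness argument of that lemma reasons only about the cyclic arrangement of the regions $R(\nu_j)$, $R(\chi)$, $R(\varphi)$ inside $R(r(T))$ and about the inter-cluster edges of $\mu^*$, together with the rerouting used to bring $R(\varphi)$ onto the external face; it never relies on the presence or absence of edges \emph{among} the $v_j$. Processing an already-independent cluster is in fact a simpler special case, since there are no former intra-cluster edges to keep track of among the singletons.

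The two points that must be checked carefully are the running time and the boundary cases. For the cost, each processed cluster $\mu^*$ is handled in time proportional to the number of its vertices plus the number of its inter-cluster edges; since each vertex belongs to a single cluster and each edge is charged to at most two clusters, the sum over all processed clusters is $O(n)$ for an $n$-vertex planar underlying graph, matching the cost of the reduction of Theorem~\ref{th:main2}. For the boundary cases, a cluster with a single inter-cluster edge yields singleton $\chi$ and $\varphi$, and a cluster with no inter-cluster edge yields empty $\chi$ and $\varphi$ that are simply discarded; both situations produce only \textsc{Type~1} clusters, so the dichotomy is preserved. I expect the bookkeeping of these degenerate cases, rather than any genuine structural difficulty, to be the most delicate part of the write-up, since the substantive correctness is inherited verbatim from Lemma~\ref{le:main2}.
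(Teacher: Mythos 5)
Your proof is correct and takes essentially the same route as the paper: the paper's own proof likewise reapplies the transformation of Theorem~\ref{th:main2} to further clusters, implicitly relying, as you make explicit, on the fact that the proof of Lemma~\ref{le:main2} never uses the non-independence of $\mu^*$. The only (harmless) deviation is that the paper splits, besides the non-independent clusters, only those independent clusters containing a vertex of degree greater than two—leaving clusters that are already of \textsc{Type~2} untouched—whereas you split every multi-vertex non-root cluster; both variants achieve the stated dichotomy in linear time, though the paper's more parsimonious selection is what later keeps the subdivision bound of Property~\ref{pr:edges3} in Lemma~\ref{le:properties3} as tight as stated.
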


All observations of Section~\ref{se:discussion} regarding the consequences of Property~\ref{pr:subdivision} of Lemma~\ref{le:properties} apply here to of Property~\ref{pr:subdivision2} of Lemma~\ref{le:properties2}. Further, the two reductions can be concatenated yielding the following.

\begin{lemma}\label{le:properties3}
Let $C(G,T)$ be an $n$-vertex clustered graph with $c$ clusters. The independent flat clustered graph $C_{\textrm{if}}(G_{\textrm{if}},T_{\textrm{if}})$ equivalent to $C$ built by concatenating the reduction of Theorem~\ref{th:main} and the reduction of Theorem~\ref{th:main2}, as modified by Observation~\ref{ob:achieved}, has the following properties:
\begin{enumerate}
\item\label{pr:subdivision3} Graph $G_{\textrm{if}}$ is a subdivision of $G$
\item\label{pr:edges3} Each inter-cluster edge of $G_f$ is replaced by a path of length at most $4h(T)-4$
\item\label{pr:vertices3} The number of vertices of $G_{\textrm{if}}$ is $O(n^2)$ 
\item\label{pr:clusters3} The number of clusters of $C_{\textrm{if}}$ is $O(n \cdot h(T))$
\item\label{pr:cluster-types3} Non-root clusters are of two types: (\textsc{Type~1}) clusters containing a single vertex of arbitrary degree or (\textsc{Type~2}) clusters containing multiple vertices of degree two 
\end{enumerate}
\end{lemma}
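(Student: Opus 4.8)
The plan is to observe that $C_{\textrm{if}}(G_{\textrm{if}},T_{\textrm{if}})$ is obtained by feeding the output $C_f(G_f,T_f)$ of the reduction of Theorem~\ref{th:main} into the reduction of Theorem~\ref{th:main2} (with the modification of Observation~\ref{ob:achieved}), so that every property should follow by composing the corresponding guarantees of Lemmas~\ref{le:properties} and~\ref{le:properties2}. Property~\ref{pr:subdivision3} is then immediate from transitivity: $G_f$ is a subdivision of $G$ by Property~\ref{pr:subdivision} of Lemma~\ref{le:properties}, $G_{\textrm{if}}$ is a subdivision of $G_f$ by Property~\ref{pr:subdivision2} of Lemma~\ref{le:properties2}, and a subdivision of a subdivision is a subdivision. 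Property~\ref{pr:cluster-types3} is inherited directly from Observation~\ref{ob:achieved}, which is exactly the guarantee of the modified second reduction; I would only need to check that the clusters already present in $C_f$ conform, namely that each original lower cluster is split into \textsc{Type~1} singletons while the subdivision clusters $\chi,\varphi$ are \textsc{Type~2} clusters of degree-two vertices.

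The key structural observation I would rely on for Property~\ref{pr:edges3} is that the clusters $\chi$ and $\varphi$ created by the first reduction are \emph{already} independent: each of their vertices is a degree-two subdivision vertex whose two neighbours lie outside the cluster, so no two vertices of the same $\chi$ (or $\varphi$) are adjacent. Consequently, when the second reduction runs on $C_f$ it processes only the original lower clusters, and the only segments it further subdivides are the inter-cluster edges incident to an original vertex. Tracing a single edge $e$ of $G$: by Property~\ref{pr:edges} of Lemma~\ref{le:properties} it is first replaced by a path with at most $4h(T)-8$ inserted vertices, all of whose interior segments join subdivision vertices and hence are never touched again; only the two end segments, one at each original endpoint, are re-subdivided, each inserting one more pair of subdivision vertices, for a total of at most $4h(T)-4$ inserted vertices.

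For the two counting properties I would simply substitute one bound into the other. Property~\ref{pr:vertices3} follows from $O(n_f)$ (Property~\ref{pr:vertices2} of Lemma~\ref{le:properties2}) together with $n_f\in O(n\cdot h(T))$ (Property~\ref{pr:vertices} of Lemma~\ref{le:properties}), giving $O(n\cdot h(T))$, and then invoking $h(T)\le n-1$ from Property~\ref{le:preconditions3} of Lemma~\ref{le:preconditions} to conclude $O(n^2)$. Property~\ref{pr:clusters3} follows from $c_{\textrm{if}}\le 2c_f+n_f-1$ (Property~\ref{pr:clusters2} of Lemma~\ref{le:properties2}) together with $c_f=c+\mathcal{S}(T)\in O(n)$ and $n_f\in O(n\cdot h(T))$, whose dominant term is $O(n\cdot h(T))$.

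I expect the main obstacle to be Property~\ref{pr:edges3}: pinning down the constant $4h(T)-4$ requires the independence observation above (so that the interior of each first-reduction path survives untouched) together with careful bookkeeping of how the second reduction acts at the two original endpoints, rather than the much weaker multiplicative bound one would obtain by naively subdividing every segment of the path. Once this composition is set up, the remaining properties are routine substitutions.
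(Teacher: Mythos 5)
Your proposal is correct and follows essentially the same route as the paper: the paper likewise disposes of Properties~\ref{pr:subdivision3}, \ref{pr:vertices3}, \ref{pr:clusters3}, and~\ref{pr:cluster-types3} by direct composition of Lemmas~\ref{le:properties} and~\ref{le:properties2} (and Observation~\ref{ob:achieved}), and singles out Property~\ref{pr:edges3} as the only step needing care. Your key observation---that the clusters $\chi$ and $\varphi$ created by the first reduction are already independent and of \textsc{Type~2}, so the second reduction touches only the original lower clusters and adds just $4$ more subdivision vertices at the two original endpoints, giving $4h(T)-8+4=4h(T)-4$---is exactly the paper's argument.
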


Lemma~\ref{le:properties3} describes the most constrained version of \textsc{Clustered Planarity} that is known to be polynomially equivalent to the general problem.
Observe that if all non-root clusters of a c-graph $C(G,T)$ are of \textsc{Type~1} 
then \textsc{Independent Flat Clustered Planarity} is linear, since $C$ is c-planar if and only if $G$ is planar. Conversely, if all clusters are of \textsc{Type~2} then the underlying graph is 
a collection of cycles, and the problem has unknown complexity~\cite{cdpp-cccc-05,cdpp-ecpg-09}.

\section{Conclusions and Open Problems}\label{se:conclusions}

We showed that \textsc{Clustered Planarity} can be reduced to \textsc{Flat Clustered Planarity} and that this problem, in turn, can be reduced to \textsc{Independent Flat Clustered Planarity}. The consequences of these results are twofold: on one side the investigations about the complexity of \textsc{Clustered Planarity} could legitimately be restricted to (independent) flat clustered graphs, neglecting more complex hierarchies of the inclusion tree; on the other side some polynomial-time results on flat clustered graphs could be easily exported to general c-graphs (we gave some examples in Section~\ref{se:discussion}). 

We remark that while Theorems~\ref{th:main} and~\ref{th:main2} are formulated in terms of decision problems, their proofs offer a solution of the corresponding search problems, meaning that they actually describe a polynomial-time algorithm to compute a c-planar drawing of a c-graph, provided to have a c-planar drawing of the corresponding flat c-graph or a c-planar drawing of the corresponding independent flat c-graph.  
 
Several interesting questions are left open:

\begin{itemize}
\item Can the reduction presented in this paper be used to generalize some other polynomial-time testing algorithm for \textsc{Flat Clustered Planarity} to plain \textsc{Clustered Planarity}?
\item What is the complexity of \textsc{Independent Flat Clustered Planarity} when the underlying graph is a cycle? We know that this problem is polynomial only for constrained drawings of the inter-cluster edges~\cite{cdpp-cccc-05,cdpp-ecpg-09}.
\item What is the complexity of \textsc{Independent Flat Clustered Planarity} when the number of \textsc{Type 2} clusters is bounded? 
\end{itemize}

\clearpage
\bibliographystyle{splncs03}
\bibliography{abbrv,bibliography}

\ifARXIV  

\clearpage
\appendix

\newgeometry{left=2cm,right=2cm,top=2cm,bottom=2cm}

\renewcommand{\thesection}{Appendix~\Alph{section}}

\section{-- Proof of Lemmas~\ref{le:homogeneous-subtree} and~\ref{le:preconditions} of Section~\ref{se:preliminaries}.}\label{ap:lemmata}

\noindent
\textbf{Lemma~\ref{le:homogeneous-subtree}.} \emph{
A homogeneous tree $T$ of height $h(T) \geq 2$ and size $\mathcal{S}(T)>0$ contains at least one node $\mu^* \neq r(T)$ such that $T[\mu^*]$ is flat.
}
\begin{proof}
Let $v$ be a leaf of $T$ whose depth is $h(T)$. Since $\mathcal{S}(T)>0$, $T$ is not flat and $h(T) \geq 3$. Consider the parent $\mu^*$ of the parent $\mu$ of $v$. The subtree $T[\mu^*]$ of $T$ has $h(T[\mu^*])=2$. Since $T$ is homogeneous and $\mu^*$ has a non-leaf child $\mu$, all the children of $\mu^*$ are internal nodes. Hence, $T[\mu^*]$ is flat.  
\end{proof}

\noindent
\textbf{Lemma~\ref{le:preconditions}.} \emph{
An instance $C(G,T)$ of \textsc{Clustered Planarity} with $n$ vertices and $c$ clusters can be reduced in time $O(n+c)$ to an equivalent instance such that:
\begin{enumerate}[(1)]
\item $T$ is homogeneous,
\item $r(T)$ has at least two children, and 
\item $h(T) \leq n-1$.
\end{enumerate}
}

\begin{proof}
First we prove Property (\ref{le:preconditions1}). Suppose that the inclusion tree $T$ of a c-graph $C(G,T)$ is not homogeneous. We transform $C$ into an equivalent c-graph $C_h(G,T_h)$ such that $T_h$ is homogeneous. Consider a node $\mu^*$ that has both internal node children and leaf children $v_1$, $v_2$, \dots, $v_k$. For each such $\mu^*$ and for each child $v_i$ of $\mu^*$, we insert between $\mu^*$ and $v_i$ a lower node $\mu_i$ that is child of $\mu^*$ and parent of $v_i$. The obtained c-graph $C_h(G,T_h)$ is homogeneous and may be constructed in time $O(n+c)$.
Also, given a c-planar drawing $\Gamma_h(C_h)$ of $C_h$ one can immediately obtain a c-planar drawing $\Gamma(C)$ of $C$ simply by ignoring the boundaries of the regions $R(\mu_i)$, where $\mu_i$ is a cluster introduced by the above described transformation. Conversely, given a c-planar drawing $\Gamma(C)$ of $C$ one can obtain a c-planar drawing $\Gamma_h(C_h)$ of $C_h$ by inserting a small boundary around the vertices $v_i$ that changed their parent in the above transformation. Hence, $C(G,T)$ is c-planar if and only if $C_h(G,T_h)$ is c-planar.  

\remove{   
For Property (\ref{le:preconditions2}), suppose to have an instance $C(G,T)$ where $r(T)$ has a single child $\mu$. By replacing $T$ with $T[\mu]$ one can obtain an instance $C'(G,T[\mu])$ equivalent to $C$. By recursively replacing the inclusion trees that have a root with a single child one either obtains a (trivially positive) instance of \textsc{Clustered Planarity} where $G$ has a single vertex or an equivalent instance where the root of the inclusion tree has at least two children. All the needed transformation may be performed in $O(c)$ time.

Finally, we prove Property (\ref{le:preconditions3}). Suppose to have an instance $C(G,T)$ with $h(T) > n$. 
By Property~(\ref{le:preconditions1}) we can achieve in $O(n+c)$ time that $C$ is homogeneous. By Property~(\ref{le:preconditions2}) we can achieve in $O(c)$ time that the root of $T$ has at least two children.
First, we prove that there is a depth $d$ such that all clusters $\mu_1, \mu_2, \dots, \mu_k$ at depth $d$ of $T$ are singletons. Denote by $n(d)$ the number of clusters and leaves of $T$ at depth $d$, where obviously $n(0) = 1$.
Suppose for a contradiction that for each depth $d = 0, \dots, h(T)-1$ at least one cluster $\mu_d$ of depth $d$ is not a singleton, that is, $\mu_d$ has at least two children. Since at least one cluster is split for each $d = 0, \dots, h(T)-1$, we have that the number of internal nodes of $T$ is at least $h(T)$. Since each one of these internal nodes has at least two children, the number of leaves is at least $h(T)+1$. This contradicts the fact that $h(T) > n$.   

Now, let $d$ be a depth such that all clusters $\mu_1, \mu_2, \dots, \mu_k$ at depth $d$ of $T$ are singletons. We can remove these clusters and replace them by their children $\nu_1, \nu_2, \dots, \nu_k$ obtaining in linear time an instance $C'(G,T')$ of depth $h(T') = h(T)-1$. It is easy to see that $C'$ is equivalent to $C$. In fact, from a c-planar drawing of $C$ a c-planar drawing of $C'$ can be obtained by ignoring the cluster borders of $R(\mu_1), \dots, R(\mu_k)$. Conversely, from a c-planar drawing of $C'$ a c-planar drawing of $C$ can be obtained by adding a border outside the borders of $R(\nu_1), \dots, R(\nu_k)$ if $\nu_1, \dots, \nu_k$ are clusters, or introducing a border around vertices $\nu_1, \dots, \nu_k$, otherwise. 
By iterating the above described transformation while $h(T) > n$ one could obtain in quadratic time an instance equivalent to the original one where the inclusion tree $T$ has height $h(T) \leq n$. In order to perform the reduction in linear time we construct the equivalent instance in a single step as follows. We traverse $T$ level by level, using an auxiliary queue to perform a breadth first search. Each time we finish a level, we check if all the nodes of that level are leaves or singleton clusters. In this case, we mark for removal all the clusters of the level. Finally, we traverse $T$ bottom-up and replace all chains of clusters marked for removal with a single edge. 
}
Finally, we prove Properties~(\ref{le:preconditions2}) and~(\ref{le:preconditions3}). Suppose to have an instance $C(G,T)$ with $h(T) > n-1$ or such that $r(T)$ has a single child. 
We traverse $T$ and recursively replace each cluster that has a single child with its child. The obtained instance $C'(G,T')$ is equivalent to original one since from a c-planar drawing of $C$ one can obtain a c-planar drawing of $C'$ simply by ignoring the boundaries of the removed clusters and from a c-planar drawing of $C'$ one can obtain a c-planar drawing of $C$ by suitably adding the boundary of each removed parent cluster around the boundary of its child cluster (or around the child vertex leaf). Since all clusters of $C'$ have at least two children Property~(\ref{le:preconditions2}) is satisfied. We claim that $h(T') \leq n$. In fact, since all the $n_i$ internal nodes of $T'$ have degree at least two, we have that the number $n$ of leaves of $T'$ is at least $n \geq n_i + 1$. Hence, $h(T') \leq n_i \leq n-1$. This proves Property~(\ref{le:preconditions3}).
\end{proof}

\remove{
\section{-- Proofs of Lemmas~\ref{le:homogeneous-subtree}--\ref{le:h-and-n} of Section~\ref{se:preliminaries}.}\label{ap:lemmata}

\noindent
\textbf{Lemma~\ref{le:homogeneous-subtree}.} \emph{
A homogeneous tree $T$ of height $h(T) \geq 2$ and size $\mathcal{S}(T)>0$ contains at least one node $\mu^* \neq r(T)$ such that $T[\mu^*]$ is flat.
}
\begin{proof}
Let $v$ be a leaf of $T$ whose depth is $h(T)$. Since $\mathcal{S}(T)>0$, $T$ is not flat and $h(T) \geq 3$. Consider the parent $\mu^*$ of the parent $\mu$ of $v$. The subtree $T[\mu^*]$ of $T$ has $h(T[\mu^*])=2$. Since $T$ is homogeneous and $\mu^*$ has a non-leaf child $\mu$, all the children of $\mu^*$ are internal nodes. Hence, $T[\mu^*]$ is flat.  
\end{proof}

\noindent
\textbf{Lemma~\ref{le:t-homogeneous}.} \emph{\textsc{Clustered Planarity} can be reduced in linear time to the case when the inclusion tree is homogeneous.}
\begin{proof}
Suppose that the inclusion tree $T$ of a c-graph $C(G,T)$ is not homogeneous. We transform $C$ into an equivalent c-graph $C_h(G,T_h)$ such that $T_h$ is homogeneous. Consider a node $\mu^*$ that has both internal node children and leaf children $v_1$, $v_2$, \dots, $v_k$. For each such $\mu^*$ and for each child $v_i$ of $\mu^*$, we insert between $\mu^*$ and $v_i$ a lower node $\mu_i$ that is child of $\mu^*$ and parent of $v_i$. The obtained c-graph $C_h(G,T_h)$ is homogeneous and may be constructed in linear time in the size of $C$ (i.e., the number of vertices of $G$ plus the number of nodes of $T$).
Also, given a c-planar drawing $\Gamma_h(C_h)$ of $C_h$ one can immediately obtain a c-planar drawing $\Gamma(C)$ of $C$ simply by ignoring the boundaries of the regions $R(\mu_i)$, where $\mu_i$ is a cluster introduced by the above described transformation. Conversely, given a c-planar drawing $\Gamma(C)$ of $C$ one can obtain a c-planar drawing $\Gamma_h(C_h)$ of $C_h$ by inserting a small boundary around the vertices $v_i$ that changed their parent in the above transformation. Hence, $C(G,T)$ is c-planar if and only if $C_h(G,T_h)$ is c-planar.  
\end{proof}

\noindent
\textbf{Lemma~\ref{le:h-and-n}.} \emph{
\textsc{Clustered Planarity} can be reduced in linear time to the case when the root of the inclusion tree $T$ has at least two children and $h(T) \leq n$, where $n$ is the number of vertices of the underlying graph.
}

\begin{proof}
Suppose to have an instance $C(G,T)$ where $r(T)$ has a single child $\mu$. By replacing $T$ with $T[\mu]$ one can obtain an instance $C'(G,T[\mu])$ equivalent to $C$. By recursively replacing the inclusion trees that have a root with a single child one either obtains a (trivially positive) instance of \textsc{Clustered Planarity} where $G$ has a single vertex or an equivalent instance where the root of the inclusion tree has at least two children.
   
Suppose to have an instance $C(G,T)$ with $h(T) > n$. By Lemma~\ref{le:t-homogeneous} we can achieve in linear time that $C$ is homogeneous. 
First, we prove that there is a depth $d$ such that all clusters $\mu_1, \mu_2, \dots, \mu_k$ at depth $d$ of $T$ are singletons. Denote by $n(d)$ the number of clusters and leaves of $T$ at depth $d$, where obviously $n(0) = 1$. Suppose for a contradiction that for each depth $d = 0, \dots, h(T)-1$ at least one cluster $\mu_d$ of depth $d$ is not a singleton, that is, $\mu_d$ has at least two children.  Since at least one cluster is split for each $d = 0, \dots, h(T)-1$, we have that $n(d+1) > n(d)$ and, hence, $n(h(T)-1) \geq h(T)$. Since $h(T) > n$ this implies that there is at least one void cluster at depth $h(T)-1$, a contradiction.   
Let $d$ be a depth such that all clusters $\mu_1, \mu_2, \dots, \mu_k$ at depth $d$ of $T$ are singletons. We can remove these clusters and replace them by their children $\nu_1, \nu_2, \dots, \nu_k$ obtaining in linear time an instance $C'(G,T')$ of depth $h(T') = h(T)-1$. It is easy to see that $C'$ is equivalent to $C$. In fact, from a c-planar drawing of $C$ a c-planar drawing of $C'$ can be obtained by ignoring the cluster borders of $R(\mu_1), \dots, R(\mu_k)$. Conversely, from a c-planar drawing of $C'$ a c-planar drawing of $C$ can be obtained by adding a border outside the borders of $R(\nu_1), \dots, R(\nu_k)$ if $\nu_1, \dots, \nu_k$ are clusters, or introducing a border around vertices $\nu_1, \dots, \nu_k$, otherwise. 
By iterating the above described transformation while $h(T) > n$ one could obtain in quadratic time an instance equivalent to the original one where the inclusion tree $T$ has height $h(T) \leq n$. In order to perform the reduction in linear time we construct the equivalent instance in a single step as follows. We traverse $T$ level by level, using an auxiliary queue to perform a breadth first search. Each time we finish a level, we check if all the nodes of that level are leaves or singleton clusters. In this case, we mark for removal all the clusters of the level. Finally, we traverse $T$ bottom-up and replace all chains of clusters marked for removal with a single edge. 
\end{proof}
}  

\section{-- Proofs of Lemmas~\ref{le:homogeneous-preserved}--\ref{le:main} of Section~\ref{se:reduction}.}\label{ap:lemma}

Let $C_i(G_i,T_i)$ be a flat c-planar c-graph and let $\mu^* \neq r(T_i)$ be a node of $T_i$ such that $T_i[\mu^*]$ is flat. Denote by $\nu_1, \nu_2, \dots, \nu_h$ the children of $\mu^*$ and by $\mu_1$, $\mu_2$, \dots, $\mu_k$ the siblings of $\mu^*$ in $T_i$. Let $C_{i+1}(G_{i+1},T_{i+1})$ be the flat c-graph constructed as described in Section~\ref{se:reduction}. We prove the following lemmas. 

\bigskip
\noindent
\textbf{Lemma~\ref{le:homogeneous-preserved}.} \emph{If $T_i$ is homogeneous then $T_{i+1}$ is homogeneous.}
\begin{proof}
Suppose $T_i$ is homogeneous. The only part of $T_i$ that is changed in~$T_{i+1}$ is the subtree $T_i[\nu]$. In particular, $\nu$ has all cluster children, while the newly introduced clusters $\chi$ and $\varphi$ have all vertex children. Hence $T_{i+1}$ is homogeneous. 
\end{proof}

\medskip\noindent
\textbf{Lemma~\ref{le:size-decreases}.} \emph{We have that $\mathcal{S}(T_{i+1}) = \mathcal{S}(T_i) - 1$.}
\begin{proof}
Consider a node $\mu \neq r(T_{i+1})$ of $T_{i+1}$ for which $h(T_{i+1}[\mu]) > 1$. Since the transformation of $C_i$ into $C_{i+1}$ only reduces the height of some subtree of $T_i$, such a node was also the root of a subtree of height greater than $1$ in~$T_i$. Conversely, consider a node $\mu \neq r(T_{i})$ of $T_{i}$ for which $h(T_{i+1}[\mu]) > 1$. If $\mu = \mu^*$ then $\mu$ is not present in~$T_{i+1}$, otherwise it is still the root of a subtree of height greater than one in~$T_{i+1}$. Hence, the number of nodes that are root of subtrees of height greater than one of $T_{i+1}$ is reduced by one with respect to the same number in~$T_i$.
\end{proof}

\bigskip
\noindent
\textbf{Lemma~\ref{le:final-flat}.} \emph{The c-graph $C_f=C_{\mathcal{S}(T)}$ is flat.}

\begin{proof}
By Property~\ref{le:preconditions1} of Lemma~\ref{le:preconditions} we can assume that $T$ is homogeneous. Lemma~\ref{le:homogeneous-preserved} ensures that all $C_i$, with $i=1, \dots, {\mathcal{S}(T)}$ are also homogeneous. By Lemma~\ref{le:size-decreases} we have that the sizes of the trees $T_i$ are decreasing and, in particular, that the size of $T_{\mathcal{S}(T)}$ is $\mathcal{S}(T) - \mathcal{S}(T) = 0$. Therefore, $T_{\mathcal{S}(T)}$ is a homogeneous tree that has size $0$ and, hence, is a flat tree. 
\end{proof}

Now, we provide the proof of Lemma~\ref{le:main} in the general case, i.e., without leveraging on Hypotheses \Hconnected and \Hnotroot. 

\bigskip
\noindent
\textbf{Lemma~\ref{le:main}.} \emph{$C_i(G_i,T_i)$ is c-planar if and only if $C_{i+1}(G_{i+1},T_{i+1})$ is c-planar.}

\begin{proof}
The first direction of the proof is straightforward. Let $\Gamma(C_i)$ be a c-planar drawing of $C_i$. We show how to construct a c-planar drawing $\Gamma(C_{i+1})$ of~$C_{i+1}$. Consider the region $R(\mu^*)$ that contains $R(\nu_i)$, with $i=1,\dots,h$ (refer to Fig.~\ref{fi:c2fc1}). The boundary of $R(\mu^*)$ is crossed exactly once by each inter-cluster edge of $\mu^*$. Identify outside the boundary of $R(\mu^*)$ two arbitrarily thin regions $R(\chi)$ and $R(\varphi)$ that follow the boundary of $R(\mu^*)$ and that intersect all and only the inter-cluster edges of $\mu^*$ exactly once (see Fig.~\ref{fi:c2fc2}). Insert into each inter-cluster edge $e$ of $\mu^*$ two vertices $e_\chi$ and $e_\varphi$, placing $e_\chi$ inside $R(\chi)$ and $e_\varphi$ inside $R(\varphi)$. By ignoring $R(\mu^*)$ you have a c-planar drawing 
$\Gamma(C_{i+1})$ of $C_{i+1}$.

\begin{figure}[htb]
\centering
\subfigure[]{\includegraphics[page=1,width=0.48\textwidth]{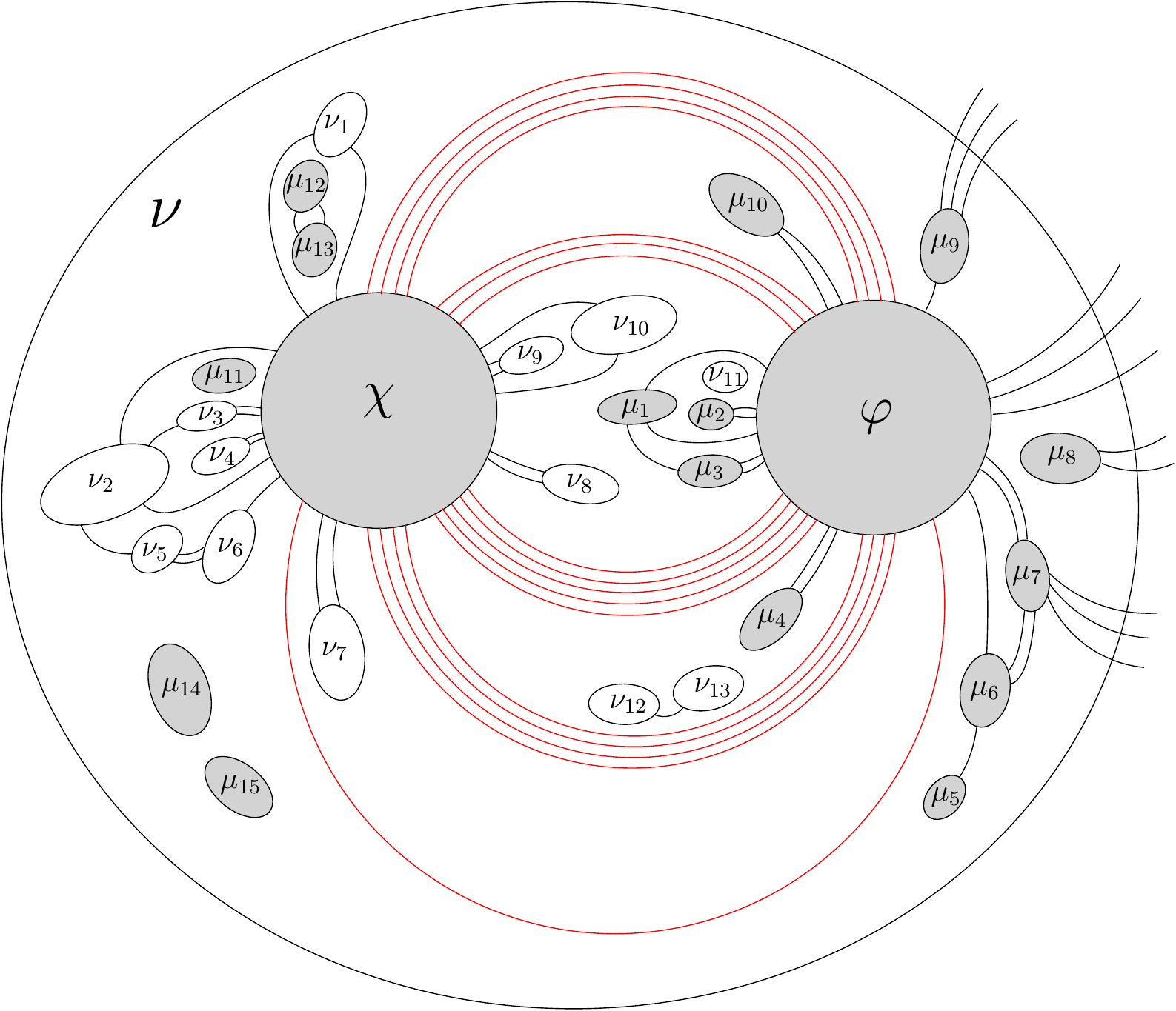}\label{fi:not-connected-a}}
\hfil
\subfigure[]{\includegraphics[page=3,width=0.48\textwidth]{figures/not-connected.pdf}\label{fi:not-connected-b}}
\caption{\subref{fi:not-connected-a} A possible drawing of cluster $\nu$ in $\Gamma(C_{i+1})$ in the case of non-connected $G_{i+1}$. The inter-cluster edges between $\chi$ and $\varphi$ are drawn red. \subref{fi:not-connected-b} The same drawing after the removal of the floating regions.}\label{fi:not-connected}
\end{figure}

Conversely, suppose to have a c-planar drawing $\Gamma(C_{i+1})$ of $C_{i+1}$. We show how to construct a c-planar drawing $\Gamma(C_i)$ of~$C_i$. Consider the regions $R(\chi)$ and $R(\varphi)$ inside $R(\nu)$ (refer to Fig.~\ref{fi:not-connected-a}). Regions $R(\chi)$ and $R(\varphi)$ are joined by the $p$ inter-cluster edges (drawn red in Fig.~\ref{fi:not-connected-a}) introduced when replacing each inter-cluster edge $e_i$ of $\mu^*$, where $i=1,\dots,p$, with a path. Such inter-cluster edges of $\chi$ and $\varphi$ partition $R(\nu)$ into $p$ regions that have to host the remaining children of $\nu$ and the inter-cluster edges among them. In particular, $p-1$ of these regions are simple and bounded by two inter-cluster edges and two portions of the boundaries of $R(\chi)$ and $R(\varphi)$. One of such regions, instead, is also externally bounded by the boundary of $R(\nu)$.

Now, consider the regions corresponding to the children $\nu_i$ of $\nu$, with $i=1,\dots,h$, that were originally children of $\mu^*$. These regions (filled white in Fig.~\ref{fi:not-connected-a}) may be without any inter-cluster edge (as, for example, $R(\nu_{11})$ in Fig.~\ref{fi:not-connected-a}); may have inter-cluster edges among themselves (as, for example, $R(\nu_{2})$, $R(\nu_{3})$, $R(\nu_{4})$, $R(\nu_{5})$, $R(\nu_{6})$, $R(\nu_{12})$, and $R(\nu_{13})$ in Fig.~\ref{fi:not-connected-a}); and may be connected to $R(\chi)$ (as, for example, $R(\nu_{1})$, $R(\nu_{2})$, $R(\nu_{3})$, $R(\nu_{4})$, $R(\nu_{6})$, $R(\nu_{7})$, $R(\nu_{8})$, $R(\nu_{9})$, and $R(\nu_{10})$ in Fig.~\ref{fi:not-connected-a}). However, by construction these regions cannot have inter-cluster edges connecting them to $R(\varphi)$, or connecting them to the regions of the original children $\mu_i$ of $\nu$, or exiting the border of $R(\nu)$. Hence, the regions corresponding to $\nu_1$, \dots, $\nu_h$ can be classified into two sets, denoted $\mathcal{A}_\chi$ and $\mathcal{F}_\chi$, of `anchored regions' and `floating regions' of $\chi$, respectively, where an \emph{anchored region of $\chi$} is a region $R(\nu_a)$ whose cluster $\nu_a$ contains at least one vertex of $G_{i+1}$ that is connected (via a path) to a vertex in $\chi$ and a \emph{floating region of $\chi$} is a region $R(\nu_f)$ whose cluster $\nu_f$ contains all vertices not connected to vertices in $\chi$. For example, in Fig.~\ref{fi:not-connected-a}, $\mathcal{F}_\chi$ contains $R(\nu_{11})$, $R(\nu_{12})$, and $R(\nu_{13})$, while $\mathcal{A}_\chi$ contains all the other white-filled regions. 

Analogously, consider the regions $R(\mu_j)$, with $j=1, \dots, k$, corresponding to the original children $\mu_j \neq \mu^*$ of $\nu$ (filled gray in Fig.~\ref{fi:not-connected-a}). These regions may be without any inter-cluster edge (as, for example, $R(\mu_{11})$, $R(\mu_{14})$, and $R(\mu_{15})$ in Fig.~\ref{fi:not-connected-a}); may have inter-cluster edges among themselves (as, for example, $R(\mu_{1})$, $R(\mu_{3})$, $R(\mu_{5})$, $R(\mu_{6})$, $R(\mu_{7})$, $R(\mu_{12})$, and $R(\nu_{13})$ in Fig.~\ref{fi:not-connected-a}); may have inter-cluster edges connecting them to $R(\varphi)$ (as, for example, $R(\mu_{2})$, $R(\mu_{3})$, $R(\mu_{4})$, $R(\mu_{6})$, $R(\mu_{7})$, $R(\mu_{9})$, and $R(\mu_{10})$ in Fig.~\ref{fi:not-connected-a}); or may have inter-cluster edges connecting them the rest of the graph outside $R(\nu)$ (as, for example, $R(\mu_{8})$ in Fig.~\ref{fi:not-connected-a}). However, by construction these regions cannot have inter-cluster edges connecting them to $R(\chi)$, or connecting them to the the regions in $\mathcal{F}_\chi$ or $\mathcal{A}_\chi$. Hence, we can classify the regions corresponding to $\mu_1$, \dots, $\mu_k$ into two sets, denoted $\mathcal{A}_\varphi$ and $\mathcal{F}_\varphi$, of `anchored regions' and `floating regions' of $\varphi$, where an \emph{anchored region of $\varphi$} is a region $R(\mu_a)$ whose cluster $\mu_a$ contains at least one vertex of $G_{i+1}$ that is connected to a vertex in $\varphi$ or to a vertex outside $\nu$ and a \emph{floating region of $\varphi$} is a region $R(\mu_f)$ whose cluster $\mu_f$ contains all vertices not connected to vertices in $\varphi$ nor outside $\nu$. For example, in Fig.~\ref{fi:not-connected-a}, set $\mathcal{F}_\varphi$ contains $R(\mu_{11})$, $R(\mu_{12})$, $R(\mu_{13})$, $R(\mu_{14})$, and $R(\mu_{15})$, while $\mathcal{A}_\varphi$ contains all the other gray-filled regions.

Our strategy will be that of removing altogether from $\Gamma(C_{i+1})$ the drawings of the floating regions (and all their content), possibly modifying the drawing of the remaining graph, and then suitably reinserting the drawing of the floating regions. 

Suppose now to have temporarily removed from $\Gamma(C_{i+1})$ the drawings of the floating regions in $\mathcal{F}_\chi$ and $\mathcal{F}_\varphi$ (see, for example, Fig.~\ref{fi:not-connected-b}). 

We define an auxiliary multigraph $H$ that has one vertex $v_\chi$ representing $\chi$ and one vertex $v_{\nu_i}$ for each child $\nu_i$ of $\mu^*$ such that $R(\nu_i) \in \mathcal{A}_\chi$. For each inter-cluster edge between two clusters $\lambda_1$ and $\lambda_2$ corresponding to the vertices $v_{\lambda_1}$ and $v_{\lambda_2}$ of $H_\chi$, respectively, we add an edge $(v_{\lambda_1},v_{\lambda_2})$ to $H$. Observe that $H$, by the definition of the anchored regions in $\mathcal{A}_\chi$, is connected.

Drawing $\Gamma(C_{i+1})$ induces a drawing $\Gamma(H)$ of the multigraph $H$, where each vertex $v_\lambda$ of $H$ is represented by the region $R(\lambda)$ of the cluster $\lambda$ corresponding to $v_\lambda$ and each edge $(v_{\lambda_1},v_{\lambda_2})$ of $H$ is represented as the corresponding inter-cluster edge of $\lambda_1$ and $\lambda_2$ restricted to the portion that is drawn outside the boundaries of $R(\lambda_1)$ and $R(\lambda_2)$. 

There are two cases: either $\Gamma(H)$ does not contain in one of its internal faces $R(\varphi)$ (\textsc{Case 1}, depicted in Fig.~\ref{fi:not-connected-b}) or it contains $R(\varphi)$ (\textsc{Case 2}, depicted in Fig.~\ref{fi:root-case-a}). 

In \textsc{Case 1} no change has to be done to $\Gamma(C_{i+1})$. In \textsc{Case 2} we modify $\Gamma(H)$ and, consequently, $\Gamma(C_{i+1})$ so to fall again into \textsc{Case 1}. Namely, we identify a minimal set $\{e_1, e_2, \dots, e_q\}$ of edges of $H$ that, if removed, would bring $R(\varphi)$ on the external face of $\Gamma(H)$ (for example in Fig~\ref{fi:root-case-a} this set contains only edge $e_1$). Starting from edge $e_1$, that is incident to the external face of $\Gamma(H)$, we redraw each $e_i$, with $i = 1, \dots, q$, as follows. Suppose that the curve for $e_i=(v_{\lambda_1},v_{\lambda_2})$ in $\Gamma(H)$ starts from a point $p_1$ on the boundary of $R(\lambda_1)$ and ends with a point $p_2$ on the boundary of $R(\lambda_2)$. We arbitrarily choose two distinct points $p_3$ and $p_4$, encountered in this order when traversing $e_i$ from $p_1$ to $p_2$. We remove the portion of $e_i$ between $p_3$ and $p_4$ and we redraw it by returning back from $p_3$ towards $p_1$ on the external face of $\Gamma(H)$ and then moving along the external face of $\Gamma(H)$ until we reach $p_4$ (see, for example, Fig.~\ref{fi:root-case-b}). Observe that this corresponds to moving the external face of $\Gamma(H)$ to a face that was previously an internal face of $\Gamma(H)$ enclosed by $e_i$. We carry on doing the same operation for each $e_i$, with $i=1, \dots, q$, until the external face of $\Gamma(H)$ is incident on the boundary of $R(\varphi)$. At this point we are in \textsc{Case 1}.

Observe that since $\Gamma(H)$ does not contain in one of its internal faces $R(\varphi)$, then it cannot contain any region in $\mathcal{A}_\varphi$ either, as, by definition, these regions are either connected to $R(\varphi)$ or to the boundary of $R(\nu)$. Hence, the internal faces of $\Gamma(H)$ only contain vertices and edges that in $C_i$ belong to $\mu^*$. 

Now we reinsert the drawings of the floating regions. We identify an arbitrarily small empty disk $F_\chi$ inside $R(\chi)$ and move inside $F_\chi$ the (suitably scaled down) drawings of the floating regions in $\mathcal{F}_\chi$. Analogously, we identify an arbitrarily small empty disk $F_\varphi$ inside $R(\varphi)$ and move inside $F_\varphi$ the (suitably scaled down) drawings of the floating regions in $\mathcal{F}_\varphi$. Consider the region $R(\mu^*)$ that is the region covered by $\Gamma(H)$. Such a region is connected, is simple, contains only vertices and nodes of $\mu^*$, and its boundary is a simple curve (see Fig.~\ref{fi:root-case-b}). Therefore, by neglecting the boundaries of $R(\chi)$ and $R(\varphi)$ and by removing their internal vertices and joining their incident edges we obtain a c-planar drawing $\Gamma(C_i)$ of~$C_i$.   
\end{proof}

\begin{figure}[htb]
\centering
\subfigure[]{\includegraphics[page=1,width=0.48\textwidth]{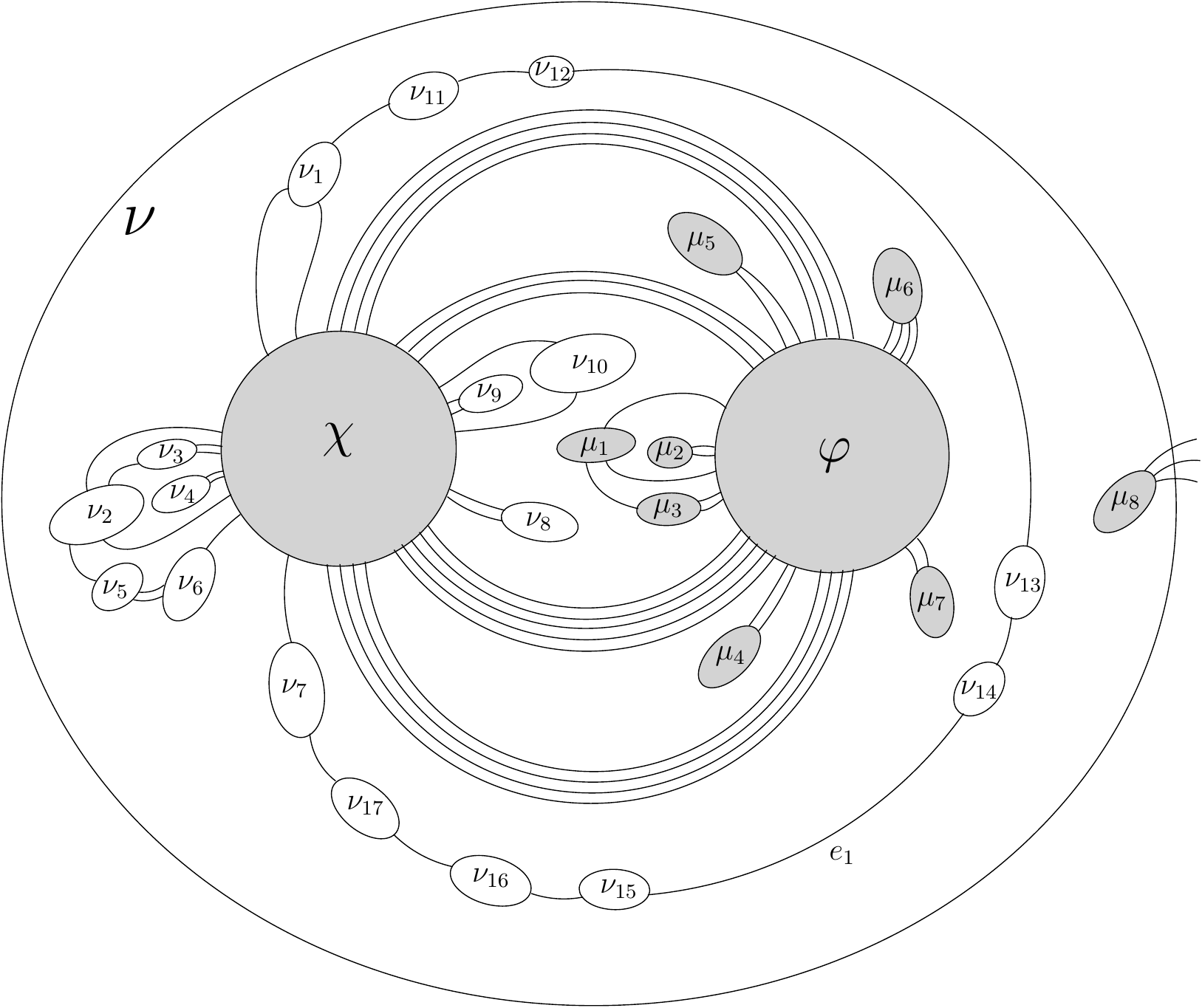}\label{fi:root-case-a}}
\hfil
\subfigure[]{\includegraphics[page=2,width=0.48\textwidth]{figures/root-case.pdf}\label{fi:root-case-b}}
\caption{\subref{fi:root-case-a} A possible drawing of cluster $\nu$ in $\Gamma(C_{i+1})$ in the case of non-connected $G_{i+1}$. \subref{fi:root-case-b} The corresponding drawing of $\nu$ in $\Gamma(C_i)$.}\label{fi:root-case}
\end{figure}

\section{-- Proof of Theorems~\ref{th:brillo} and~\ref{th:giordano1-extension} of Section~\ref{se:discussion}.}\label{ap:discussion}

\bigskip
\noindent
\textbf{Theorem~\ref{th:brillo}.} \emph{
Let $C(G,T)$ be an n-vertex c-graph where $G$ has a fixed embedding. There exists an $O(n^3 \cdot h(T)^3)$-time algorithm to test the c-planarity of $C$ if each lower cluster has at most two vertices on the same face of $G$ and each higher cluster has at most two inter-cluster edges on the same face of $G$.
}

\begin{proof}
The proof is based on showing that, starting from a c-graph $C(G,T)$ that satisfies the hypotheses of the statement, the equivalent flat c-graph $C_f(G_f,T_f)$ built as described in the proof of Theorem~\ref{th:main} satisfies the hypotheses of Theorem~\ref{th:brillo-original}. 
By Property~\ref{le:preconditions1} of Lemma~\ref{le:preconditions} we can assume that $T$ is homogeneous. Observe that the transformation of $T$ into an homogeneous tree described in the proof of Lemma~\ref{le:preconditions} only introduces lower clusters that contain a single vertex and, hence, preserves the property that each higher cluster has at most two inter-cluster edges incident to the same face.

The transformation of $G_i$ into $G_{i+1}$ described in the proof of Theorem~\ref{th:main} removes one higher cluster $\mu^*$ and introduces two lower clusters $\chi$ and $\varphi$. Each inter-cluster edge $e=(u,v)$ of $\mu^*$ is subdivided into three edges $(u,e_\chi)$, $(e_\chi,e_\varphi)$, and $(e_\varphi,v)$, where $e_\chi \in \chi$ and $e_\varphi \in \varphi$. Since at most two inter-cluster edges of $\mu^*$ belong to the same face of $G_i$ we have that the lower clusters $\chi$ and $\varphi$ have at most two vertices on the same face of $G_{i+1}$. Any other higher or lower cluster of $T_{i+1}$ is not modified by the transformation. It follows that $C_f(G_f,T_f)$, which, with the exception of the root cluster, has only lower clusters, satisfies the conditions of Theorem~\ref{th:brillo-original}. 

Hence, we first transform $C(G,T)$ into $C_f(G_f,T_f)$ in $O(n^2)$ time (Theorem~\ref{th:main}) and then apply Theorem~\ref{th:brillo-original} to $C_f(G_f,T_f)$, which gives an answer to the c-planarity test in $O(n_f^3)$ time, which is, by Property~\ref{pr:vertices} of Lemma~\ref{le:properties}, $O(n^3 \cdot h(T)^3)$ time.    
\end{proof}

\bigskip
\noindent
\textbf{Theorem~\ref{th:giordano1-extension}.} \emph{
\textsc{Clustered Planarity} can be solved in $2^{O(h(T) \cdot \sqrt{\ell n}\cdot \log(n \cdot h(T))}$ time for n-vertex embedded c-graphs with maximum face size $\ell$ and height~$h(T)$ of the inclusion tree.
}

\begin{proof}
The proof is based on applying Theorem~\ref{th:giordano1} to the flat c-graph $C_f(G_f,T_f)$ built as described in the proof of Theorem~\ref{th:main} and equivalent to $C(G,T)$. By Property~\ref{pr:edges} of Lemma~\ref{le:properties} each edge of $G$ is replaced by a path of length at most $2h(T)-2$. Hence, each face of $G_f$ has a maximum size $\ell_f = \ell \cdot O(h(T))$. Also, by Property~\ref{pr:vertices} of Lemma~\ref{le:properties} we have that the number of vertices of $G_f$ is $n_f \in O(n \cdot h(T))$. Theorem~\ref{th:giordano1} guarantees that we can test for c-planarity in $2^{O(\sqrt{\ell_f n_f}\cdot \log n_f)}$ time, which gives the statement.
\end{proof}

\section{-- Proof of Lemma~\ref{le:main2} of Section~\ref{se:reduction2}.}\label{ap:reduction2}

In this section, we provide the proof of Lemma~\ref{le:main2} in the general case, i.e., without leveraging on Hypothesis \mbox{\Hconnected}.
Let $C_i(G_i,T_i)$ be a flat c-planar c-graph and let $\mu^*$ be a non-independent cluster of $C_i$ containing vertices $v_1$, $v_2$, \dots, $v_h$ of $G_i$. Also, denote by $\nu_j$, with
$j = 1, 2, \dots, l$, those children of $r(T_i)$ such that $\nu_j \neq \mu^∗$. 
Let $C_{i+1}(G_{i+1},T_{i+1})$ be the flat c-graph constructed as described in Section~\ref{se:reduction2}. We have the following. 

\bigskip
\noindent
\textbf{Lemma~\ref{le:main2}.} \emph{$C_i(G_i,T_i)$ is c-planar if and only if $C_{i+1}(G_{i+1},T_{i+1})$ is c-planar.}

\begin{proof}
The proof is similar to the proof of Lemma~\ref{le:main}.
Given a c-planar drawing $\Gamma(C_i)$ of the flat c-graph $C_i$, we show how to construct a c-planar drawing $\Gamma(C_{i+1})$ of $C_{i+1}$ (refer to Fig.~\ref{fi:fc2ifc}). The construction is based on identifying two arbitrary thin regions $R(\chi)$ and $R(\varphi)$ outside the border of $R(\mu^*)$ such that $R(\chi)$ and $R(\varphi)$ intersect exactly once all and only the inter-cluster edges of $\mu^*$. By ignoring $R(\mu^*)$, by inserting for each inter-cluster edge $e$ of $\mu^*$ vertices $e_\chi$ and $e_\varphi$ in $R(\chi)$ and $R(\varphi)$, respectively, and by adding a boundary to each vertex $v_1$, \dots, $v_h$, we obtain $\Gamma(C_{i+1})$.

\begin{figure}[tb]
\centering
\subfigure[]{\includegraphics[page=1,height=0.35\textwidth]{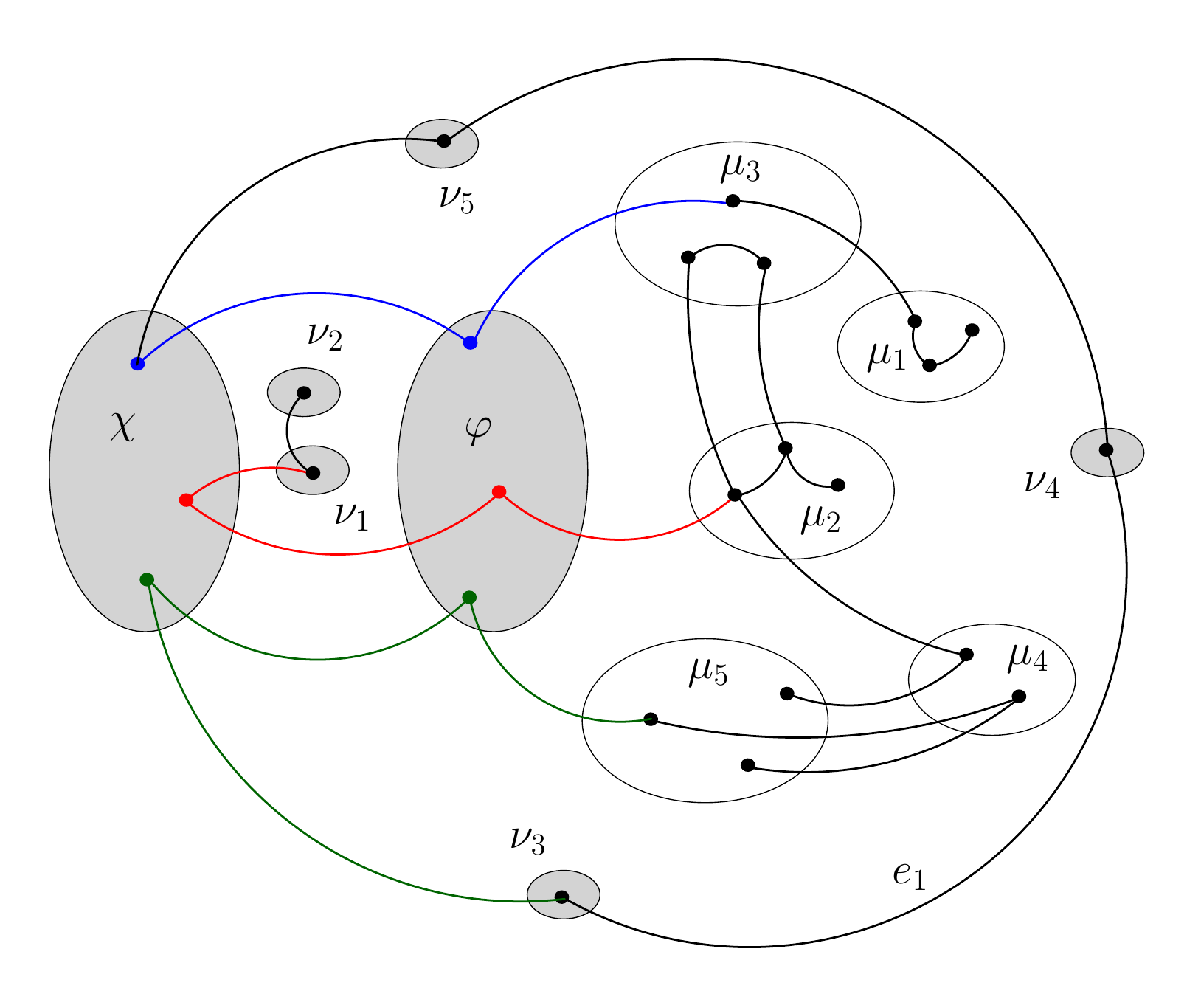}\label{fi:ifc2fc-a}}
\hfil
\subfigure[]{\includegraphics[page=2,height=0.35\textwidth]{figures/ifc2fc.pdf}\label{fi:ifc2fc-b}}
\caption{\subref{fi:ifc2fc-a} A c-planar drawing of the flat c-graph of Fig.~\ref{fi:construction2-b}. \subref{fi:ifc2fc-b} The corresponding c-planar drawing the flat c-graph of Fig.~\ref{fi:construction2-a}. The gray region is $R(\mu^*)$.}\label{fi:ifc2fc}
\end{figure}

Converserly, given a c-planar drawing $\Gamma(C_{i+1})$ of the flat c-graph $C_{i+1}$, we show how to construct a c-planar drawing $\Gamma(C_{i})$ of $C_{i}$ (refer to Fig.~\ref{fi:ifc2fc}). 
Consider the regions corresponding to the independent clusters $\nu_i$, with $i=1,\dots,h$, containing the nodes that were originally children of $\mu^*$. These regions (filled white in Fig.~\ref{fi:ifc2fc-a}) may be without any inter-cluster edge; may have inter-cluster edges among themselves; and may be connected to $R(\chi)$. However, by construction these regions cannot have inter-cluster edges connecting them to $R(\varphi)$, or connecting them to the regions of the original children $\mu_i$ of $\rho$. Hence, the regions corresponding to $\nu_1$, \dots, $\nu_h$ can be classified into two sets, denoted $\mathcal{A}_\chi$ and $\mathcal{F}_\chi$, of `anchored regions' and `floating regions' of $\chi$, respectively, where an \emph{anchored region of $\chi$} is a region $R(\nu_a)$ containing a vertex of $G_{i+1}$ that is connected (via a path) to a vertex in $\chi$ and a \emph{floating region of $\chi$} is a region $R(\nu_f)$ containing a vertex that is not connected to vertices in~$\chi$. 

Analogously, consider the regions $R(\mu_j)$, with $j=1, \dots, l$, corresponding to the original children $\mu_j \neq \mu^*$ of $\rho$ (filled gray in Fig.~\ref{fi:ifc2fc-a}). These regions may be without any inter-cluster edge; may have inter-cluster edges among them; or may have inter-cluster edges connecting them to $R(\varphi)$. However, by construction these regions cannot have inter-cluster edges connecting them to $R(\chi)$, or connecting them to the the regions in $\mathcal{F}_\chi$ or $\mathcal{A}_\chi$. Hence, we can classify the regions corresponding to $\mu_1$, \dots, $\mu_l$ into two sets, denoted $\mathcal{A}_\varphi$ and $\mathcal{F}_\varphi$, of `anchored regions' and `floating regions' of $\varphi$, where an \emph{anchored region of $\varphi$} is a region $R(\mu_a)$ whose cluster $\mu_a$ contains at least one vertex of $G_{i+1}$ that is connected to a vertex in $\varphi$ and a \emph{floating region of $\varphi$} is a region $R(\mu_f)$ whose cluster $\mu_f$ contains all vertices not connected to vertices in $\varphi$. 

Our strategy will be that of removing altogether from $\Gamma(C_{i+1})$ the drawings of the floating regions (and all their content), possibly modifying the drawing of the remaining graph, and then suitably reinserting the drawing of the floating regions. 

Suppose now to have temporarily removed from $\Gamma(C_{i+1})$ the drawings of the floating regions in $\mathcal{F}_\chi$ and $\mathcal{F}_\varphi$. 
We define an auxiliary multigraph $H$ that has one vertex $v_\chi$ representing $\chi$ and one vertex $v_{\nu_i}$ for each singleton $\nu_i$ introduced when removing $\mu^*$ such that $R(\nu_i) \in \mathcal{A}_\chi$. For each inter-cluster edge between two clusters $\lambda_1$ and $\lambda_2$ corresponding to the vertices $v_{\lambda_1}$ and $v_{\lambda_2}$ of $H_\chi$, respectively, we add an edge $(v_{\lambda_1},v_{\lambda_2})$ to $H$. Observe that $H$, by the definition of the anchored regions in $\mathcal{A}_\chi$, is connected.

Drawing $\Gamma(C_{i+1})$ induces a drawing $\Gamma(H)$ of the multigraph $H$, where each vertex $v_\lambda$ of $H$ is represented by the region $R(\lambda)$ of the cluster $\lambda$ corresponding to $v_\lambda$ and each edge $(v_{\lambda_1},v_{\lambda_2})$ of $H$ is represented as the corresponding inter-cluster edge of $\lambda_1$ and $\lambda_2$ restricted to the portion that is drawn outside the boundaries of $R(\lambda_1)$ and $R(\lambda_2)$. 

Two are the cases: either $\Gamma(H)$ does not contain in one of its internal faces $R(\varphi)$ (\textsc{Case 1}) or it contains $R(\varphi)$ (\textsc{Case 2}, depicted in Fig.~\ref{fi:ifc2fc-a}). 

In \textsc{Case 1} no change has to be done to $\Gamma(C_{i+1})$. In \textsc{Case 2} we modify $\Gamma(H)$ and, consequently, $\Gamma(C_{i+1})$ so to fall again into \textsc{Case 1}. Namely, we identify a minimal set $\{e_1, e_2, \dots, e_q\}$ of edges of $H$ that, if removed, would bring $R(\varphi)$ on the external face of $\Gamma(H)$ (for example in Fig~\ref{fi:ifc2fc-a} this set contains only edge $e_1$). Starting from edge $e_1$, that is incident to the external face of $\Gamma(H)$, we redraw each $e_i$, with $i = 1, \dots, q$, as follows. Suppose that the curve for $e_i=(v_{\lambda_1},v_{\lambda_2})$ in $\Gamma(H)$ starts from a point $p_1$ on the boundary of $R(\lambda_1)$ and ends with a point $p_2$ on the boundary of $R(\lambda_2)$. We arbitrarily choose two distinct points $p_3$ and $p_4$, encountered in this order when traversing $e_i$ from $p_1$ to $p_2$. We remove the portion of $e_i$ between $p_3$ and $p_4$ and we redraw it by returning back from $p_3$ towards $p_1$ on the external face of $\Gamma(H)$ and then moving along the external face of $\Gamma(H)$ until we reach $p_4$ (see, for example, Fig.~\ref{fi:ifc2fc-b}). Observe that this corresponds to moving the external face of $\Gamma(H)$ to a face that was previously an internal face of $\Gamma(H)$ enclosed by $e_i$. We carry on doing the same operation for each $e_i$, with $i=1, \dots, q$, until the external face of $\Gamma(H)$ is incident on the boundary of $R(\varphi)$. At this point we are in \textsc{Case 1}.    

Observe that since $\Gamma(H)$ does not contain in one of its internal faces $R(\varphi)$, then it cannot contain any region in $\mathcal{A}_\varphi$ either, as, by definition, these regions are connected to $R(\varphi)$. Hence, the internal faces of $\Gamma(H)$ only contain vertices and edges that in $C_i$ belong to $\mu^*$. 

Now we reinsert the drawings of the floating regions. We identify an arbitrarily small empty disk $F_\chi$ inside $R(\chi)$ and move inside $F_\chi$ the (suitably scaled down) drawings of the floating regions in $\mathcal{F}_\chi$. Analogously, we identify an arbitrarily small empty disk $F_\varphi$ inside $R(\varphi)$ and move inside $F_\varphi$ the (suitably scaled down) drawings of the floating regions in $\mathcal{F}_\varphi$. Consider the region $R(\mu^*)$ that is the region covered by $\Gamma{H}$. Such a region is connected, is simple, contains only vertices and nodes of $\mu^*$, and its boundary is a simple curve (see Fig.~\ref{fi:ifc2fc-b}). Therefore, by neglecting the boundaries of $R(\chi)$, $R(\varphi)$, $\nu_1$, $\nu_2$, \dots, $\nu_h$ and by removing the internal vertices of $R(\chi)$ and $R(\varphi)$ and joining their incident edges we obtain a c-planar drawing $\Gamma(C_i)$ of~$C_i$. 
\end{proof}

\section{-- Proof of Lemmas~\ref{le:properties2} and~\ref{le:properties3} and of Observation~\ref{ob:achieved} of Section~\ref{se:discussion2}.}\label{ap:discussion2}

\bigskip
\noindent
\textbf{Lemma~\ref{le:properties2}.} \emph{
Let $C_f(G_f,T_f)$ be an $n_f$-vertex flat clustered graph with $c_f$ clusters. The independent flat clustered graph $C_{\textrm{if}}(G_{\textrm{if}},T_{\textrm{if}})$ equivalent to $C_f$ built as described in the proof of Theorem~\ref{th:main2} has the following properties:
\begin{enumerate}
\item\label{pr:subdivision2} Graph $G_{\textrm{if}}$ is a subdivision of $G_f$
\item\label{pr:edges2} Each inter-cluster edge of $G_f$ is replaced by a path of length at most $4$.
\item\label{pr:vertices2} The number of vertices of $G_{\textrm{if}}$ is $O(n_f)$ 
\item\label{pr:clusters2} The number of clusters of $C_{\textrm{if}}$ (including the root) is $c_{\textrm{if}} \leq 2c_f + n_f -1$
\end{enumerate}
}

\begin{proof}
Property~\ref{pr:subdivision2} descends from the fact that each step of the transformation of $G_{\textrm{f}}$ into $G_{\textrm{if}}$ consists of edge subdivisions only. In particular, every inter-cluster edge of $\mu^*$ is subdivided twice when removing the non-independent cluster $\mu^*$. It follows that inter-cluster edges are replaced by paths of length at most $4$ (exactly $4$ if the edge links two non-independent clusters). This proves Property~\ref{pr:edges2}. Since by Property~\ref{pr:edges2} each edge is replaced by a path of bounded length and $G_\textrm{f}$ has $O(n_\textrm{f})$ edges, the number of vertices of $G_{\textrm{if}}$ is $O(n_\textrm{f})$ (Property~\ref{pr:vertices2}).
In order to prove Property~\ref{pr:clusters2} observe that when removing a non-independent cluster $\mu^*$ two new clusters are introduced and all vertices of $\mu^*$ are enclosed into new singleton clusters. Therefore, the number $c_{\textrm{if}}$ of clusters of $C_{\textrm{if}}$ is at most $2c_\textrm{f} + n_\textrm{f} -1$ (the minus 1 is due to the fact that the root cluster does not need to be removed).
\end{proof}

\bigskip
\noindent
\textbf{Observation~\ref{ob:achieved}.} \emph{
At the same asymptotic cost of the reduction described in the proof of Theorem~\ref{th:main2} it can be achieved that non-root clusters are of two types: (\textsc{Type~1}) clusters containing a single vertex of arbitrary degree or (\textsc{Type~2}) clusters containing multiple vertices of degree two. 
}

\begin{proof}
The property is achieved if, in addition to removing non-independent clusters of the instace $C_\textrm{f}(G_\textrm{f},T_\textrm{f})$, we also use the same technique described in the proof of Theorem~\ref{th:main2} to remove those independent clusters of $C_\textrm{f}$ that contain at least one vertex of degree greater than $2$. In this case all clusters of $C_{\textrm{if}}$ that contain more than one vertex are guaranteed to have all degree-two vertices. The cost of the reduction is still linear for the same reasons discussed in the proof of Theorem~\ref{th:main2}.
\end{proof}

\bigskip
\noindent
\textbf{Lemma~\ref{le:properties3}.} \emph{
Let $C(G,T)$ be an $n$-vertex clustered graph with $c$ clusters. The independent flat clustered graph $C_{\textrm{if}}(G_{\textrm{if}},T_{\textrm{if}})$ equivalent to $C$ built by concatenating the reduction of Theorem~\ref{th:main} and the reduction of Theorem~\ref{th:main2}, as modified by Observation~\ref{ob:achieved}, has the following properties:
\begin{enumerate}
\item\label{pr:subdivision3} Graph $G_{\textrm{if}}$ is a subdivision of $G$
\item\label{pr:edges3} Each inter-cluster edge of $G_f$ is replaced by a path of length at most $4h(T)-4$
\item\label{pr:vertices3} The number of vertices of $G_{\textrm{if}}$ is $O(n^2)$ 
\item\label{pr:clusters3} The number of clusters of $C_{\textrm{if}}$ is $O(n \cdot h(T))$
\item\label{pr:cluster-types3} Non-root clusters are of two types: (\textsc{Type~1}) clusters containing a single vertex of arbitrary degree or (\textsc{Type~2}) clusters containing multiple vertices of degree two 
\end{enumerate}
}

\begin{proof}
Properties~\ref{pr:subdivision3}, \ref{pr:vertices3}, and \ref{pr:clusters3} directly descends by concatenating the analogous Properties~\ref{pr:subdivision}, \ref{pr:vertices3}, and \ref{pr:clusters3} of Lemmas~\ref{le:properties} and~\ref{le:properties2}. Property~\ref{pr:cluster-types3} is a direct consequence of Observation~\ref{ob:achieved}. The only property that needs a detailed proof is Property~\ref{pr:edges3}. 
By Property~\ref{pr:edges} of Lemma~\ref{le:properties} the first transformation of $C(G,T)$ into the flat c-graph $C_\textrm{f}(G_\textrm{f},T_\textrm{f})$ replaces an edge with a path of length at most $4h(T)-8$ (see Figs.~\ref{fi:edge-subdivision-a} and~\ref{fi:edge-subdivision-b}). When transforming $C_\textrm{f}(G_\textrm{f},T_\textrm{f})$ into the independent flat c-graph $C_\textrm{if}(G_\textrm{if},T_\textrm{if})$ only the original lower clusters of $C$ ($\mu_1$ and $\mu_2$ in the example of Fig.~\ref{fi:edge-subdivision}) need to be replaced, since the cluster introduced by the first transformation are already independent and of \textsc{Type 2}. By Property~\ref{pr:edges2} of Lemma~\ref{le:properties2} this adds $4$ more internal vertices to each replaced edge (see Fig.~\ref{fi:edge-subdivision-c}). Hence, each edge of $G$ is replaced by a path of length at most $4h(T)-4$ in $G_\textrm{if}$.   
\end{proof}

\begin{figure}[tb]
\centering
\subfigure[]{\includegraphics[page=1,width=0.50\textwidth]{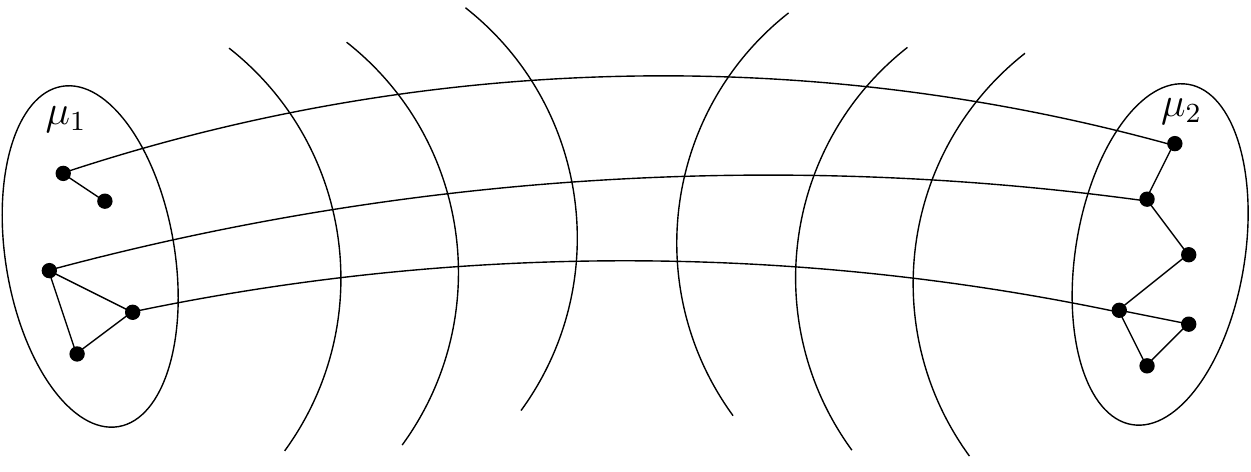}\label{fi:edge-subdivision-a}}
\hfil
\subfigure[]{\includegraphics[page=2,width=0.50\textwidth]{figures/edge-subdivision.pdf}\label{fi:edge-subdivision-b}}
\hfil
\subfigure[]{\includegraphics[page=3,width=0.50\textwidth]{figures/edge-subdivision.pdf}\label{fi:edge-subdivision-c}}
\caption{A figure for the proof of Lemma~\ref{le:properties3}. \subref{fi:edge-subdivision-a} An example of a c-graph where $h(T)=5$. Edges connecting vertices in two lower clusters $\mu_1$ and $\mu_2$ traverse at most $2h(T)-4=6$ boundaries of higher clusters. \subref{fi:edge-subdivision-b} The corresponding flat c-graph obtained as described in the proof of Theorem~\ref{th:main} replaces each edge with a path of at most $4h(T)-8=12$. \subref{fi:edge-subdivision-c} The final independent flat c-graph obtained as described in the proof of Theorem~\ref{th:main2} replaces each original edge with a path of length at most $4h(T)-4=16$.}\label{fi:edge-subdivision}
\end{figure}

\fi

\end{document}